\documentclass[amscd,amssymb,reqno
]{amsart}
\linespread{1.1}

\usepackage{amsmath,amsthm,amscd,amssymb}
\usepackage{latexsym}
\usepackage{color}
\usepackage{mathrsfs}
\usepackage{geometry}
\usepackage{graphicx}

\usepackage{relsize}
\usepackage{tikz}
\usetikzlibrary{calc}

\usepackage{subfig}

\usepackage[all]{xy}

\setlength{\textheight}{8.in}
\numberwithin{equation}{section}

\newtheorem{Thm}{Theorem}[section]

\newtheorem{Lem}[Thm]{Lemma}
\newtheorem{Prop}[Thm]{Proposition}

\newtheorem{Rmk}[Thm]{Remark}

\newtheorem{Fact}[Thm]{Fact}
\newtheorem{Notation}[Thm]{Notation}

\renewcommand{\l}{\lambda}
\renewcommand{\o}{\omega}
 
\newcommand{\N}{\mathbb{N}}

\newcommand{\R}{\mathbb{R}}
\newcommand{\E}{\mathbb{E}}

\renewcommand{\L}{\Lambda}
\renewcommand{\E}{\mathbb{E}_{(\aaa),\L}}

\newcommand{\M}{\mathcal{M}_l}
\newcommand{\V}{|\Lambda|}
\newcommand{\aaa}{\mu, \vec{\lambda}, \beta}
\newcommand{\m}{\rho, \vec{u}, E}
\renewcommand{\M}{N, \vec{P}, H}

\usepackage{etoolbox}

\makeatletter
\let\old@tocline\@tocline
\let\section@tocline\@tocline
\newcommand{\subsection@dotsep}{4.5}
\newcommand{\subsubsection@dotsep}{4.5}
\patchcmd{\@tocline}
  {\hfil}
  {\nobreak
     \leaders\hbox{$\m@th
        \mkern \subsection@dotsep mu\hbox{.}\mkern \subsection@dotsep mu$}\hfill
     \nobreak}{}{}
\let\subsection@tocline\@tocline
\let\@tocline\old@tocline

\patchcmd{\@tocline}
  {\hfil}
  {\nobreak
     \leaders\hbox{$\m@th
        \mkern \subsubsection@dotsep mu\hbox{.}\mkern \subsubsection@dotsep mu$}\hfill
     \nobreak}{}{}
\let\subsubsection@tocline\@tocline
\let\@tocline\old@tocline

\let\old@l@subsection\l@subsection
\let\old@l@subsubsection\l@subsubsection

\def\@tocwriteb#1#2#3{%
  \begingroup
    \@xp\def\csname #2@tocline\endcsname##1##2##3##4##5##6{%
      \ifnum##1>\c@tocdepth
      \else \sbox\z@{##5\let\indentlabel\@tochangmeasure##6}\fi}%
    \csname l@#2\endcsname{#1{\csname#2name\endcsname}{\@secnumber}{}}%
  \endgroup
  \addcontentsline{toc}{#2}%
    {\protect#1{\csname#2name\endcsname}{\@secnumber}{#3}}}%

\newlength{\@tocsectionindent}
\newlength{\@tocsubsectionindent}
\newlength{\@tocsubsubsectionindent}
\newlength{\@tocsectionnumwidth}
\newlength{\@tocsubsectionnumwidth}
\newlength{\@tocsubsubsectionnumwidth}
\newcommand{\settocsectionnumwidth}[1]{\setlength{\@tocsectionnumwidth}{#1}}
\newcommand{\settocsubsectionnumwidth}[1]{\setlength{\@tocsubsectionnumwidth}{#1}}
\newcommand{\settocsubsubsectionnumwidth}[1]{\setlength{\@tocsubsubsectionnumwidth}{#1}}
\newcommand{\settocsectionindent}[1]{\setlength{\@tocsectionindent}{#1}}
\newcommand{\settocsubsectionindent}[1]{\setlength{\@tocsubsectionindent}{#1}}
\newcommand{\settocsubsubsectionindent}[1]{\setlength{\@tocsubsubsectionindent}{#1}}

\renewcommand{\l@section}{\section@tocline{1}{\@tocsectionvskip}{\@tocsectionindent}{}{\@tocsectionformat}}%
\renewcommand{\l@subsection}{\subsection@tocline{1}{\@tocsubsectionvskip}{\@tocsubsectionindent}{}{\@tocsubsectionformat}}%
\renewcommand{\l@subsubsection}{\subsubsection@tocline{1}{\@tocsubsubsectionvskip}{\@tocsubsubsectionindent}{}{\@tocsubsubsectionformat}}%
\newcommand{\@tocsectionformat}{}
\newcommand{\@tocsubsectionformat}{}
\newcommand{\@tocsubsubsectionformat}{}
\expandafter\def\csname toc@1format\endcsname{\@tocsectionformat}
\expandafter\def\csname toc@2format\endcsname{\@tocsubsectionformat}
\expandafter\def\csname toc@3format\endcsname{\@tocsubsubsectionformat}
\newcommand{\settocsectionformat}[1]{\renewcommand{\@tocsectionformat}{#1}}
\newcommand{\settocsubsectionformat}[1]{\renewcommand{\@tocsubsectionformat}{#1}}
\newcommand{\settocsubsubsectionformat}[1]{\renewcommand{\@tocsubsubsectionformat}{#1}}
\newlength{\@tocsectionvskip}
\newcommand{\settocsectionvskip}[1]{\setlength{\@tocsectionvskip}{#1}}
\newlength{\@tocsubsectionvskip}
\newcommand{\settocsubsectionvskip}[1]{\setlength{\@tocsubsectionvskip}{#1}}
\newlength{\@tocsubsubsectionvskip}
\newcommand{\settocsubsubsectionvskip}[1]{\setlength{\@tocsubsubsectionvskip}{#1}}

\patchcmd{\tocsection}{\indentlabel}{\makebox[\@tocsectionnumwidth][l]}{}{}
\patchcmd{\tocsubsection}{\indentlabel}{\makebox[\@tocsubsectionnumwidth][l]}{}{}
\patchcmd{\tocsubsubsection}{\indentlabel}{\makebox[\@tocsubsubsectionnumwidth][l]}{}{}

\newcommand{\@sectypepnumformat}{}
\renewcommand{\contentsline}[1]{%
  \expandafter\let\expandafter\@sectypepnumformat\csname @toc#1pnumformat\endcsname%
  \csname l@#1\endcsname}
\newcommand{\@tocsectionpnumformat}{}
\newcommand{\@tocsubsectionpnumformat}{}
\newcommand{\@tocsubsubsectionpnumformat}{}
\newcommand{\setsectionpnumformat}[1]{\renewcommand{\@tocsectionpnumformat}{#1}}
\newcommand{\setsubsectionpnumformat}[1]{\renewcommand{\@tocsubsectionpnumformat}{#1}}
\newcommand{\setsubsubsectionpnumformat}[1]{\renewcommand{\@tocsubsubsectionpnumformat}{#1}}
\renewcommand{\@tocpagenum}[1]{%
  \hfill {\mdseries\@sectypepnumformat #1}}

\let\oldappendix\appendix
\renewcommand{\appendix}{%
  \leavevmode\oldappendix%
  \addtocontents{toc}{%
    \protect\settowidth{\protect\@tocsectionnumwidth}{\protect\@tocsectionformat\sectionname\space}%
    \protect\addtolength{\protect\@tocsectionnumwidth}{2em}}%
}
\makeatother


\makeatletter
\settocsectionnumwidth{2em}
\settocsubsectionnumwidth{2.5em}
\settocsubsubsectionnumwidth{3em}
\settocsectionindent{1pc}%
\settocsubsectionindent{\dimexpr\@tocsectionindent+\@tocsectionnumwidth}%
\settocsubsubsectionindent{\dimexpr\@tocsubsectionindent+\@tocsubsectionnumwidth}%
\makeatother


\settocsectionvskip{10pt}
\settocsubsectionvskip{0pt}
\settocsubsubsectionvskip{0pt}




\let\oldtableofcontents\tableofcontents
\renewcommand{\tableofcontents}{%
  \vspace*{-\linespacing}
  \oldtableofcontents}

\setcounter{tocdepth}{3}


\date{\today}

\title[Entropy minimization and convergence]
{On entropy minimization and convergence} 
\author{S. Dostoglou}
\address{Department of Mathematics, 
          University of Missouri, 
          Columbia, MO 65211}
\email{dostoglous@missouri.edu}
\author{A. Hughes}
\address{Department of Mathematics, 
          University of Missouri, 
          Columbia, MO 65211}
\email{amhf44@mail.missouri.edu}

\author{Jianfei Xue}
\address{Department of Mathematics, 
   University of Arizona, 
          Tucson, AZ 85721}
\email{jxue@math.arizona.edu}

\thanks{Work supported by NSF Grant No.~DMS-1440140 and NSA Grant No.~H98230-18-1-0269  while the first two authors were in residence at the Mathematical Sciences Research Institute in Berkeley, California during the Fall 2018 semester.}

\begin{document}




\begin{abstract}
We examine the minimization of information entropy for measures on the phase space of bounded domains, subject to constraints that are averages of grand canonical distributions. We describe the set of all such constraints and show that it equals the set of averages of all  probability measures absolutely continuous with respect to the standard measure on the phase space (with the exception of the measure concentrated on the empty configuration). We also investigate how the set of constrains relates to the domain of the microcanonical thermodynamic limit entropy. We then show that, for fixed constraints, the parameters of the corresponding grand canonical distribution converge, as volume increases, to the corresponding parameters (derivatives, when they exist) of the thermodynamic limit entropy.

The results hold when the energy is the sum of any stable, tempered  
interaction potential that satisfies the Gibbs variational principle (e.g.~Lennard-Jones) and the kinetic energy. 

The same tools and the strict convexity of the thermodynamic limit pressure for continuous systems (valid whenever the Gibbs variational principle holds) give solid foundation to the folklore local homeomorphism between thermodynamic and macroscopic quantities.
\end{abstract}

\subjclass[2010]{82B05 82B21}  
\keywords{entropy minimization, thermodynamic limit, grand canonical ensemble, strict convexity, convex conjugate}
\maketitle

\tableofcontents

\section{Introduction}

For systems with a large number of components, information entropy optimization subject to a few given constraints is widely used in several fields, see \cite{BKM} for example.
In statistical mechanics, for bounded domains in space, one minimizes the information entropy of measures, absolutely continuous with respect to the standard measure on the phase space, subject to appropriate macroscopic constraints for density, velocity, and energy, see \cite{J}, \cite{Z}. The minimizing measure, a grand-canonical Gibbs distribution, is characterized by parameters conjugate to the constraints, in the sense that differentiating the logarithm of the partition function of the measure with respect to these parameters returns the constraints. 
On the other hand, it is standard how, as the volume increases, one defines microcanonical entropy in terms of thermodynamic limits as a function of the given  macroscopic quantities. Convex conjugation  
(or derivatives, when they exist) now define the thermodynamic parameters at the limit.

Motivated by:
\begin{itemize}
\item 
the need to know in advance in numerical simulations  the range of constraints that can be used when minimizing information entropy on bounded domains, as well as
\item
the comparison of hydrodynamic equations from statistical mechanics on bounded domains, which heavily relies on minimizing information entropy (see \cite{BAR} and \cite{Z}), to the standard hydrodynamic limit approach (see \cite{OVY}), 
\end{itemize}
we ask the following:
\begin{itemize}
\item 
what are the appropriate constraints for a given domain and how does the set of these constraints change with domain?
\item
do the thermodynamic parameters for finite volume converge to the thermodynamic limit parameters while the macroscopic constraints remain fixed?
\end{itemize} 
{
In this article we address these questions. In doing so we also obtain some other results which, to our knowledge, are new to the literature. 

To start, in section \ref{functions}, we examine rigorously the constrained  information entropy minimization over a bounded domain. It is standard that if we assume the constrains are averages of some grand canonical distribution this grand canonical distribution is the unique minimizer, cf.~\cite{Gibbs}. We give an explicit description of the set of possible averages of grand-canonical distributions on bounded domains, cf.~Theorem \ref{bijection}. Furthermore, Theorem \ref{prop: in the interior}  asserts that this set in fact coincides with the averages of all measures that are absolutely continuous with respect to the standard measure on phase space (with the exception of a trivial case: the measure concentrated on the empty configuration). This shows that the usual assumption on the constrains is not restrictive. It is worth mentioning that the usual formal Lagrange multipliers argument (cf.~\cite{Z}) is not employed here. In fact, we find the application of Lagrange multipliers in this setting questionable, cf.~section \ref{constrained mini on L}.

We then proceed to thermodynamic limits in section \ref{Limits}. After recalling definitions and standard properties of the thermodynamic limit microcanonical entropy and pressure we discuss the strict convexity property of the thermodynamic limit pressure. The arguments presented here rely on the Gibbs variational principle. Notice here that, by the convex conjugacy between pressure and microcanonical entropy, the strict convexity of pressure gives differentiability of entropy which plays a role in sections \ref{Main} and \ref{localbijection}.

In section \ref{Main} we answer the question of convergence of thermodynamic parameters. We start by showing in Theorem \ref{interiors} that the domain of the thermodynamic limit entropy (where it is not $-\infty$) is nonempty and also give it an explicit characterization which improves the results in \cite{L} and \cite{M-L}. 
Then Theorem \ref{MainTheorem} shows that, whenever we fix macroscopic constraints in the thermodynamic limit entropy, the finite volume parameters of the corresponding grand-canonical distributions converge, as the domain tends to infinity, to the thermodynamic limit parameters corresponding to the same macroscopic quantities.

The main point here is that when the domain is bounded information entropy and the log-partition function are related via convex conjugation (Legendre-Fenchel transform). But at the thermodynamic limit the convex conjugate of pressure is microcanonical entropy. As pressure is the limit of log-partition functions, to relate information entropy (for macroscopic constraints  that do not change as the domain increases) to microcanonical entropy (at the same macroscopic values), we need to know convergence of the conjugates to the conjugate of the limit. We show that the standard results for this type of convergence from convex analysis, involving convergence of epigraphs of the functions in question, apply. In the presence of derivatives at the limit, already established in work by Georgii \cite{Gii} for example, the results follow.

Related work on determining parameters of Gibbs states via maximum likelihood methods on bounded domains appears in \cite{CG}, \cite{Gi}, \cite{DL}. We compare this to our results in section \ref{MaxLike}.   
Note that convergence of {\em microcanonical entropy} on bounded domains to the thermodynamic limit microcanonical entropy is known, see \cite{RP}. In addition, an argument that appears in \cite{L} and \cite{M-L} shows that the information entropy of the {\em canonical} Gibbs distributions for a {\em fixed} inverse temperature converges to the microcanonical thermodynamic entropy evaluated at the energy corresponding to that fixed temperature. We compare this convergence to our results in section \ref{LanAgain}.

Finally, section \ref{localbijection} uses strict convexity results from section \ref{Limits} to show that there is indeed a local homeomorphism between macroscopic and thermodynamic limit parameters. The existence of such a homeomorphism is often used, and is in fact a cornerstone of some seminal work 
\cite[p.~530, p.~556]{OVY}. For the $1$-dimensional case see \cite[Proposition 5.3]{LR}, \cite[Theorem 10.2]{V}. \\
}

\noindent
{\bf Sign Conventions:} We do not use a negative sign for the coefficient of the energy in the exponent of the grand canonical distribution and we define information entropy to be the average of the logarithm of the distribution function. In this way the natural domain for the grand canonical distribution is on negative inverse temperatures and the information entropy on finite volumes here is that of \cite{K}, but the opposite of \cite{Z}. These conventions avoid the awkward minus sign in front of the energy expectation and render the conjugate of information entropy as the log-partition function. But, since we work with convex functions on finite volumes, at the limit we get the opposite of the concave microcanonical entropy.

We present our results for space dimension $3$, but it will be clear that they hold in any dimension.

\section{Minimum Entropy on Bounded Domains}\label{functions}

{It is standard that a finite grand-canonical Gibbs distribution is the unique minimizer of information (Gibbs) entropy among all distributions with the same average as the grand canonical, see \cite[p.~130]{Gibbs}. The main point of this section is to describe the set of all such averages in Theorem \ref{bijection} on bounded domains and how this set depends on the domain. Theorem \ref{prop: in the interior} shows that this set includes the averages of all measures absolutely continuous with respect to the standard measure on phase space (with the exception of the measure concentrated on the empty configuration). The rest of the section collects standard results on information entropy, its minimizations, and the log-partition function, with proofs where these are not available in the references.}

\subsection{Measures and Notation} 
\label{basics}
{For  $\L\subset \R^3$ bounded and measurable set with volume $|\L|>0$,}
we work on the phase space
\begin{equation}
\begin{split}
       \mathfrak{X}_\L = \bigsqcup_{n\geq 0} \left( \Lambda^n \times \R^{3n}\right)= \{\emptyset\} \sqcup \left( \L\times \R^3 \right) \sqcup \left(\L^2 \times \R^6 \right) \sqcup \ldots 
\end{split}
\end{equation}
where $\{\emptyset\}$ indicates the empty configuration consisting of no particles.
Unless otherwise specified, we assume the underlying $\sigma$-algebra to be $\bigcup_{n\geq 0} \mathcal B(\L^n\times \R^{3n})$ where $\mathcal B(\L^n\times \R^{3n})$ is the Borel $\sigma$-algebra on $\L^n\times \R^{3n}$.
{In general $q$ and $p$ stand for the position and velocity of a particle, respectively.} We take the reference measure on $\mathfrak{X}_\L$ to be 
\begin{equation}
\begin{split}
       \omega=
        \sum_{n \geq 0} 
        \omega_n\left(dq_1, \ldots, dq_n,dp_1, \ldots, dp_n\right)
\end{split}
\end{equation}
where
\begin{equation}
\begin{split}
        \o_0(\{\emptyset\}) = 1,\quad        
        \omega_n(dq_1, \ldots, dq_n,dp_1, \ldots, dp_n) 
        =  
        \frac{1}{n!}
        \ dq_1 \ldots dq_n \ 
        dp_1 \ldots dp_n,\quad n \geq 1.
\end{split}
\end{equation}
Then
\begin{equation}
\begin{split}
       \int_ {\mathfrak{X}_\L}  
       f  \ d\omega 
       = 
       \sum_{n \geq 0}
       \int_{\L^n \times \R^{3n}}f(q_1,\ldots,q_n,p_1,\ldots,p_n) 
       \ d\omega_n.
\end{split}
\end{equation}
Throughout, $\mathcal P_\L$ denotes the space of probability measures on $\mathfrak X_\L$.

For any configuration $(\tilde q,\tilde p)\in \mathfrak X_\L$, the total energy $H(\tilde q,\tilde p)$ consists {of} both kinetic energy and potential energy:
\begin{equation} \label{energy}
\begin{split}
       H: \mathfrak{X}_\L \to \R, 
       \quad 
       (q_1,\ldots,q_n,p_1,\ldots,p_n) 
       \mapsto 
       \frac12\sum_{i =1}^n |p_i|^2 +  U(q_1,\ldots,q_n).
\end{split}
\end{equation}
Throughout this article we will assume the following for the interaction potential $U$:
\begin{itemize}
\item
$U$ is {\bf stable} in the sense that there is $L>0$ with 
\begin{equation}\label{stability}
     U(q_1,\ldots,q_n) \geq -nL, \quad n \in \N.
\end{equation}
\item
The potential energy of the empty or single particle configurations vanish: $U(\emptyset) = U(q_1) = 0$.
\item 
$U$ is shift invariant: $U(q_1,\ldots,q_n) = U(q_1+ h,\ldots,q_n+h)$, for all $h\in \R^3$, and all $n\geq 1$.
\item
$U$ is symmetric: $U(q_1,\ldots,q_n) = U(q_{\sigma(1)}, \ldots, q_{\sigma(n)})$ for any permutation $\sigma$ and all $n\geq 1$.
\end{itemize} 
{An abundance of examples of interaction potentials satisfying these assumptions can be found in \cite[pp.~34-39]{R}. These include pairwise interaction potentials of the form
\begin{equation}
\begin{split}
   U(q_1,\ldots,q_n)
   = 
   \sum_{i\neq j} \Phi(|q_i - q_j|),
\end{split}
\end{equation}
for appropriate $\Phi: [0, \infty) \to \R$. 
 Section \ref{Limits} here will require further assumptions on $U$.}

Define also the particle number and total momentum by
\begin{equation}
\begin{split}
       &N: \mathfrak{X}_\L \to \N, \quad (q_1,\ldots,q_n,p_1,\ldots,p_n) 
       \mapsto n, \\
      &\vec P: \mathfrak{X}_\L \to \R^3,  \quad (q_1,\ldots,q_n,p_1,\ldots,p_n) \mapsto \sum_{i=1}^n p_i.
\end{split}
\end{equation}
{In the above definitions, for the empty configuration, i.e.\,when  $n=0$, we take $(N,\vec P, H) = (0,\vec 0,0)$. }
For $(\aaa)$ in $\mathfrak I :=\R\times \R^3 \times \R_{<0}$,  define 
\begin{equation} \label{GCG}
\begin{split}
     g_{(\aaa),\L}(\tilde q,\tilde p)
    =
     \frac{1}{Z_{(\aaa),\L}}
     \exp
     \left(
     (\mu, \vec \l, \beta)
     \cdot
     (N, \vec P, H
     )
     (\tilde q,\tilde p)
     \right), 
\end{split}
\end{equation}
where
$Z_{(\aaa),\L}$ is the normalization constant (partition function):
\begin{equation}
\begin{split}
     Z_{(\aaa),\L} 
     = 
     \int_{\mathfrak{X}_\L} 
     \exp\left((\mu, \vec \l, \beta)\cdot(N, \vec P, H)(\tilde q,\tilde p)\right)
     \ \o(d\tilde q,d\tilde p).
\end{split}
\end{equation}
The above choice of $(\aaa)\in \mathfrak I$ is justified by the following simple lemma.
\begin{Lem} \label{Ihere}
Let $\mathfrak{I}_\L\subset \R \times \R^3 \times \R$ be the set of $(\aaa)$ such that $Z_{(\aaa),\L} < \infty$. Then
  $\mathfrak{I}_\L = \mathfrak{I}$.
  \end{Lem}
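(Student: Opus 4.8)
The plan is to compute $Z_{(\aaa),\L}$ directly from its series representation and to separate the contributions of the momenta from those of the positions. Writing out the reference measure term by term gives
\begin{equation}
Z_{(\aaa),\L} = \sum_{n\geq 0}\frac{1}{n!}\int_{\L^n\times\R^{3n}}\exp\left(\mu n + \vec\l\cdot\sum_{i=1}^n p_i + \beta\Big(\tfrac12\sum_{i=1}^n|p_i|^2 + U(q_1,\ldots,q_n)\Big)\right)dq\,dp,
\end{equation}
where the $n=0$ summand is the constant $1$ coming from the empty configuration. Since the momentum-dependent part of the exponent splits as a product over the particles, each $p_i$-integral is the Gaussian $\int_{\R^3}\exp(\vec\l\cdot p + \tfrac{\beta}{2}|p|^2)\,dp$, which is finite precisely when $\beta<0$ and equals an explicit constant $C(\vec\l,\beta)$ in that case.

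First I would prove the inclusion $\mathfrak I\subseteq\mathfrak I_\L$. Assuming $\beta<0$, the momentum integrals contribute the factor $C(\vec\l,\beta)^n$, leaving the positional integral $\int_{\L^n}\exp(\beta U)\,dq$. Here the stability hypothesis \eqref{stability} enters: because $\beta<0$ and $U\geq -nL$, we have $\beta U\leq |\beta|nL$, hence $\int_{\L^n}\exp(\beta U)\,dq\leq |\L|^n e^{|\beta|Ln}$. Substituting these bounds yields
\begin{equation}
Z_{(\aaa),\L}\leq \sum_{n\geq 0}\frac{1}{n!}\Big(C(\vec\l,\beta)\,e^{\mu}\,|\L|\,e^{|\beta|L}\Big)^n = \exp\Big(C(\vec\l,\beta)\,e^{\mu}\,|\L|\,e^{|\beta|L}\Big)<\infty,
\end{equation}
so $(\aaa)\in\mathfrak I_\L$ for every $\mu\in\R$ and $\vec\l\in\R^3$.

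For the reverse inclusion $\mathfrak I_\L\subseteq\mathfrak I$, I would argue that $\beta\geq 0$ forces $Z_{(\aaa),\L}=\infty$. All summands are nonnegative, so it suffices to exhibit one infinite term; the single-particle term $n=1$ does the job, since $U(q_1)=0$ reduces it to $e^\mu|\L|\int_{\R^3}\exp(\vec\l\cdot p + \tfrac{\beta}{2}|p|^2)\,dp$, and this Gaussian-type integral diverges whenever $\beta\geq 0$ (for $\beta>0$ the integrand grows without bound, while for $\beta=0$ the integral of $\exp(\vec\l\cdot p)$ over $\R^3$ is infinite for every $\vec\l$). Combining the two inclusions gives $\mathfrak I_\L=\mathfrak I$.

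The computation is routine; the only point that requires care is the convergence of the series in the first inclusion, where stability is exactly what keeps the positional integrals from growing too fast in $n$. I expect this to be the main (indeed the only) substantive step: without \eqref{stability} the bound on $\int_{\L^n}\exp(\beta U)\,dq$ could fail and the series need not converge even for $\beta<0$.
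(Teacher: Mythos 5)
Your proof is correct and follows essentially the same route as the paper: the forward inclusion is the paper's ``complete the square and use stability'' step written out explicitly (the Gaussian momentum factor plus the stability bound $\int_{\L^n}e^{\beta U}\,dq\leq|\L|^n e^{|\beta|Ln}$, summing to an exponential), and the reverse inclusion is exactly the paper's observation that the $n=1$ term diverges whenever $\beta\geq 0$. No gaps; your version merely spells out the details the paper leaves to the reader.
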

\proof
Complete the square for the $p$'s in the exponent and use the stability condition \eqref{stability} to see that $\beta<0$ is sufficient. 
Conversely, for $\beta \geq 0$, and regardless of $\vec \l$, the term of the integral for $N=1$ gives
\begin{equation} 
     \int_{\L\times \R^3} 
     \exp\left(\mu + \vec \l \cdot  \vec p + \beta \frac{|\vec p|^2}{2}\right)
     dq dp
     = + \infty.
\end{equation}
\qed

For each $(\aaa) \in \mathfrak I$, the probability measure $g_{(\aaa),\L}\o$ is the Gibbs grand-canonical distribution over $\L$ with parameters $(\aaa)$.
We will use $\mathbb{E}_{(\aaa),\L}[F]$ for the expectation of a function $F$ on $\mathfrak{X}_\L$ with respect to $g_{(\aaa),\L}\o$, i.e.,
\begin{equation}
\begin{split}
    \mathbb{E}_{(\aaa),\L}[F]
    =
    \int_{\mathfrak{X}_\L}
    F(\tilde q,\tilde p) \ 
    \frac{\exp\left(
    (\mu, \vec \l, \beta)\cdot(N, \vec P, H)(\tilde q,\tilde p) 
    \right)
}
    {Z_{(\aaa),\L}}
    \ \o(d \tilde q,d\tilde p).  
\end{split}
\end{equation}

\subsection{Information Entropy} 
For any $\nu\in \mathcal P_\L$, define the information (Gibbs) entropy by
\begin{equation}
        s_\L (\nu)=
        \begin{cases}
          \displaystyle \frac1{\V}\int_{\mathfrak{X}_\L} (\log f) f d\o 
           , & \nu\ll \o, f := \dfrac{d\nu}{d\o}  \\
           \\
         +\infty & {\text o/w}.
        \end{cases}
\end{equation}
In particular, if $f_n$ is the density of $\nu$ with respect to $\omega_n$,
i.e.\ $\nu = \sum_{n\geq 0} f_n \omega_n$,
then
\begin{equation}
\begin{split}
    s_\L (\nu) 
    =
    \frac1{\V}
    \sum_{n\geq 0}
    \ 
    \int\limits_{\L^n\times \R^{3n}}  
     \log f_n \ f_n d\omega_n.
\end{split}
\end{equation}
It is standard that $s_\L$ is bounded below by $0$ (using the convexity of the function $t\mapsto t\log t$ and Jensen's inequality), therefore it is a proper\footnote{Recall that a convex function is proper if it never takes the value $-\infty$ and it is not identically $+\infty$.} convex function on $\mathcal P_\L$. 

\subsection{Constrained Entropy Minimization on Fixed Volumes} \label{constrained mini on L}

In statistical mechanics it is standard to minimize the information entropy $s_\L$ over $\mathcal P_\L$ subject to suitable constraints 
\begin{equation} \label{constraints}
\begin{split}
        \int_{\mathfrak{X}_\L} \frac{N}{|\L|} d\nu = \rho,\quad
        \int_{\mathfrak{X}_\L} \frac{ \vec P}{|\L|} d\nu = \vec u, \quad
        \int_{\mathfrak{X}_\L} \frac{H}{|\L|} d\nu = E,
\end{split}
\end{equation}
and derive the grand canonical distributions as the minimizer.
For this, it is common to argue via Lagrange multipliers, see for example \cite[p.~66]{Z}. This is a formal application of the Lagrange multiplier theory on spaces of functions. 
{It is not at all clear to us how the Lagrange multipliers in infinite dimensions as for example in \cite[\S 43.8]{Zei}, \cite{BCM}, can be used to make such arguments rigorous.}{ (For example, one would need to choose a Banach space of functions in which the set of probability densities forms an open set.)}  
On the other hand, such arguments can find solid ground if one employs some differential structure in the space of measures, as for example in \cite{PS}. We shall come back to this point in future work. For the moment we recall with minor modifications the argument from \cite{K}: 
to minimize $s_\L$, it clearly suffices to look only at measures absolutely continuous with respect to $\o$, i.e.\,$\nu = f\o$. Assume there exists $(\mu, \vec \l, \beta)\in \mathfrak I$  such that 
\begin{equation} \label{mainEqn}
\begin{split}
     (\rho, \vec u, E)
      =
      \E
      \left[
      \frac{1}{|\L|} 
       (N, \vec P, H)
       \right], 
\end{split}
\end{equation}
an assumption that is not at all restrictive, see comments after equation \eqref{frak{S}L} below.
Then for $g_{(\aaa),\L}$ the grand canonical density as in \eqref{GCG},
by \eqref{mainEqn} 
we have that $\displaystyle \int_{\mathfrak{X}_\L}  f \log g_{(\aaa),\L} \ d\o$
is finite, therefore
\begin{equation}
\begin{split}
       |\L|  s_\L (f \o)
       &=
       \int_{\mathfrak{X}_\L} 
      f \log \left( \frac{f}{g_{(\aaa),\L}} \right) d\o
       +
       \int_{\mathfrak{X}_\L}
       f \log(g_{(\aaa),\L}) 
       \ d\o
       \\
       &=
        \int_{\mathfrak{X}_\L} 
      f \log \left( \frac{f}{g_{(\aaa),\L}} \right) d\o
      +
       (\mu, \vec \l,  \beta)\cdot(\rho, \vec u ,E)
       -
       \log Z_{(\aaa),\L} \\
       &\geq
       (\mu,  \vec \l,\beta)\cdot(\rho, \vec u , E)
       -       \log Z_{(\aaa),\L} 
       = |\L| s_\L(g_{(\aaa),\L} \o).
\end{split}
\end{equation}
In the last inequality we have used that the relative entropy $\displaystyle  \int_{\mathfrak{X}_\L}  f \log \left( \frac{f}{g_{(\aaa),\L}} \right) d\o$ is always greater than or equal to zero (by Jensen's inequality). As the relative entropy is zero if and only if $f = g_{(\aaa),\L}$ a.e., we conclude the standard

\medskip
\noindent
{\bf Entropy Minimization:}
For $(\rho,  \vec u, E)$ such that  there exists $(\aaa)$ 
with \eqref{mainEqn} satisfied the grand-canonical distribution $g_{(\aaa),\L} \o$ is the unique minimizer to $s_\L$
over all $\nu$'s in $\mathcal P_\L$ satisfying \eqref{constraints}.

Notice in the above process of entropy minimization the assumptions on the constrains $(\m)$: 
we require that $(\m) \in  \mathfrak{S}_\L $ where
\begin{equation} \label{frak{S}L}
\begin{split}
    \mathfrak{S}_\L 
    =
    \left\{
    (\rho, \vec u, E):
    {\rm there\ is}\ (\mu, \vec \l, \beta) 
    \in \mathfrak{I} 
    { \rm \ such \ that\
    \eqref{mainEqn}  \ holds}
    \right\}.
\end{split}
\end{equation}
In the next two subsections, we provide a detailed description of  $\mathfrak{S}_\L$ and will see that the assumption $(\m)\in \mathfrak{S}_\L$ is not restrictive at all: if $\displaystyle (\m) = \int_{\mathfrak{X}_\L} \frac{1}{|\L|} (N, \vec P, H) d\nu$ for any $\nu$ absolutely continuous probability measure with respect to $\o$ then $(\m)\in \mathfrak{S}_\L$, unless $\nu$ is the Dirac measure at the empty configuration, see Theorem \ref{prop: in the interior} and Theorem \ref{bijection}.


 \subsection{Homeomorphism between $\mathfrak I$ and $\mathfrak{S}_\L$}
%
%
%
We now introduce the log-partition function:
\begin{equation} \label{freenergyfinitevol}
        \Phi_\L(\aaa)=
         \begin{cases}
           \displaystyle 
           \frac1{|\L|}
           \log 
           Z_{(\aaa),\L}, &(\mu, \vec \l, \beta) \in \mathfrak{I}\\
           \\
        +\infty, & {\text o/w}.
        \end{cases}
\end{equation}
 A straightforward application of Fatou's lemma shows that $\Phi_\L$ is closed, in the sense that it has closed epigraph\footnote{Recall that the epigraph of $f$ is $\{(x,a): f(x) \leq a\}$.}.  A standard calculation shows that $\Phi_\L$ is strictly convex on $\mathfrak{I}$. Therefore $\nabla \Phi_\L$ defines a bijection
from $\mathfrak{I}$ to $\mathfrak{S}_\L$, see \cite[Theorem B, p.~99]{RV}. That this bijection is in fact a homeomorphism follows from 

\begin{Lem} \label{covex lemma}
Let $f$ be a proper and closed convex function with domain an open set. If $f$ is differentiable and strictly convex on its domain, then $\nabla f$ defines a homeomorphism on the image of $\nabla f$.
\end{Lem}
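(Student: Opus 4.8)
The plan is to write $D$ for the (open, convex) domain of $f$ and $S=\nabla f(D)$ for its image, and to establish the three properties that together say that $\nabla f\colon D\to S$ is a homeomorphism: that $\nabla f$ is continuous, that it is injective, and that its inverse $(\nabla f)^{-1}\colon S\to D$ is continuous. The first is standard and I would not belabour it: a convex function that is finite and differentiable on an open set is automatically $C^1$ there, so $\nabla f$ is continuous on $D$ (this is the content of the gradient-continuity results in \cite{RV}). For injectivity I would use that strict convexity forces strict monotonicity of the gradient. Concretely, for $x\neq x'$ in $D$ set $\phi(t)=f\bigl(x'+t(x-x')\bigr)$; convexity makes $\phi'$ nondecreasing, so if $\langle \nabla f(x)-\nabla f(x'),x-x'\rangle=\phi'(1)-\phi'(0)=0$ then $\phi'$ is constant and $\phi$ is affine on $[0,1]$, contradicting strict convexity of $f$ along the segment. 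Hence $\langle \nabla f(x)-\nabla f(x'),x-x'\rangle>0$ whenever $x\neq x'$, and in particular $\nabla f(x)\neq\nabla f(x')$, so $\nabla f$ is injective onto $S$.

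The only real work is continuity of the inverse, and here the cleanest finish is Brouwer's invariance of domain: a continuous injective map from an open subset $D\subseteq\R^n$ into $\R^n$ is automatically an open map, so its image $S$ is open and the map is a homeomorphism of $D$ onto $S$. Since $\nabla f$ maps the open set $D\subseteq\R^n$ into $\R^n$ and we have just shown it is continuous and injective, invariance of domain applies verbatim and yields that $\nabla f\colon D\to S$ is a homeomorphism with $S$ open. I would note that this route uses only differentiability and strict convexity on the open domain; properness and closedness of $f$ are not needed for it (they are, of course, available in our application to $\Phi_\L$).

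The step I expect to be the genuine obstacle is precisely inverse continuity, and it is worth recording why the elementary sequential attempt does not close on its own. If $y_k\to y_0$ in $S$ with $y_k=\nabla f(x_k)$, then any subsequence of $\{x_k\}$ converging to a point $\bar x\in D$ must, by continuity and injectivity, converge to $x_0=(\nabla f)^{-1}(y_0)$; the difficulty is excluding the possibility that $\{x_k\}$ escapes every compact subset of $D$ (toward the boundary or to infinity), which is a properness statement about $\nabla f$ that does not follow from continuity and injectivity alone. Invariance of domain sidesteps this entirely. As an alternative I would sketch the convex-conjugate route: strict convexity of $f$ makes each $\partial f^{*}(y_0)=\{x\in D:\nabla f(x)=y_0\}$ a singleton, so $f^{*}$ is differentiable on $S$ with $\nabla f^{*}=(\nabla f)^{-1}$, and continuity of $\nabla f^{*}$ would give the inverse's continuity; the delicate point there is checking that $S$ lies in the interior of $\operatorname{dom}f^{*}$ (equivalently, that the escape-to-the-boundary behaviour is controlled), which is exactly the same obstacle in dual clothing. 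For this reason I would present the invariance-of-domain argument as the main proof.
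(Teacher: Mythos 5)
Your proposal is correct, but it takes a genuinely different route from the paper. The paper stays entirely inside Rockafellar's convex duality: it observes that since $\operatorname{dom}(f)$ is open, the subdifferential $\partial f$ coincides with $\nabla f$ on $\operatorname{dom}(f)$ and is empty elsewhere, so $f$ is \emph{essentially smooth} (\cite[Theorem 26.1]{Rock}); combined with strict convexity this makes $f$ a convex function of Legendre type, and \cite[Theorem 26.5]{Rock} then delivers the homeomorphism in one stroke. You instead prove injectivity by hand (strict monotonicity of the gradient along segments, which is exactly right) and obtain continuity of the inverse from Brouwer's invariance of domain. Each approach buys something: yours needs neither properness nor closedness of $f$ and correctly isolates inverse continuity as the real obstacle, sidestepping it with a topological theorem; the paper's route, at the cost of invoking the Legendre-transform machinery, yields strictly more information --- namely that the image of $\nabla f$ is $\operatorname{int}\operatorname{dom}(f^{*})$ and that $(\nabla f)^{-1}=\nabla f^{*}$ --- and this identification of the inverse as the gradient of the conjugate is precisely what the paper exploits later (the finite-volume conjugacy between $\Phi_\L$ and $\Phi_\L^{*}$, and the relation $(\nabla \Xi)^{-1}=-\nabla s$ in section \ref{localbijection}). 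Your own closing remark about the dual route via $f^{*}$, and the boundary-escape issue it must control, is in effect a sketch of what \cite[Theorem 26.5]{Rock} packages up; so the two proofs are complementary rather than in conflict.
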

\begin{proof}
The domain of the subdifferential of $f$ is between intdom$(f)$ and dom$(f)$, see \cite[p.~253]{Rock}.
Since the domain is open, this means that the subdifferential is identical to $\nabla f$ on dom$(f)$ and empty everywhere else.
Then $f$ is essentially differentiable by \cite[Theorem 26.1]{Rock}.
As $f$ is also strictly convex, by assumption, the statement follows from \cite[Thorem 26.5]{Rock}.
\qed
\end{proof}
Notice now that 
for $(\aaa)$ in $\mathfrak{I}$
\begin{equation}
    \E
    \left[ 
    \frac{1}{|\L|}
    \left(N,  \vec P, H \right)\
     \right]
     =
     \frac{1}{|\L|}
     \nabla_{\aaa}
     \log Z_{(\aaa),\L}
     =
     \nabla_{\aaa} \ \Phi_\L.
\end{equation}
We therefore have:
\begin{Prop}  \label{global bijection}
$\mathfrak{S}_\L = {\rm Image}(\nabla \Phi_\Lambda)$ and the map 
\begin{equation}
\begin{split}
     (\mu, \vec \l, \beta)
     \mapsto
     \E
     \left[ \dfrac{1}{|\L|}
     \left({N}, \vec P, H\right)
     \right]
\end{split}
\end{equation}
defines a homeomorphism between $\mathfrak{I}$ and $\mathfrak{S}_\L$.
\end{Prop}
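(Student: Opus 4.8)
The plan is to recognize the displayed map as exactly $\nabla\Phi_\L$ restricted to the open set $\mathfrak{I}$, and then to read off the two assertions from the facts already assembled just above the statement. First I would record that, by the gradient identity preceding the proposition,
\begin{equation*}
    (\aaa)\longmapsto \E\left[\frac{1}{|\L|}\left(N,\vec P, H\right)\right] = \nabla_{\aaa}\Phi_\L
\end{equation*}
holds for every $(\aaa)\in\mathfrak{I}$, so the map in question is literally the gradient of $\Phi_\L$ on its domain $\mathfrak{I}$.

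For the first assertion I would simply unwind the definition \eqref{frak{S}L}: a point $(\m)$ lies in $\mathfrak{S}_\L$ precisely when there exists $(\aaa)\in\mathfrak{I}$ satisfying \eqref{mainEqn}, i.e.\ when $(\m) = \nabla_{\aaa}\Phi_\L$ for some $(\aaa)\in\mathfrak{I}$. Hence $\mathfrak{S}_\L = \{\nabla_{\aaa}\Phi_\L : (\aaa)\in\mathfrak{I}\}$. Since $\Phi_\L \equiv +\infty$ off $\mathfrak{I}$ and $\mathfrak{I}$ is open, $\Phi_\L$ is differentiable exactly on $\mathfrak{I}$, so the set on the right is precisely ${\rm Image}(\nabla\Phi_\L)$; this gives $\mathfrak{S}_\L = {\rm Image}(\nabla\Phi_\L)$.

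For the second assertion I would invoke the two results quoted above. Because $\Phi_\L$ is proper, closed, and strictly convex on the open set $\mathfrak{I}$, \cite[Theorem B, p.~99]{RV} gives that $\nabla\Phi_\L\colon\mathfrak{I}\to\mathfrak{S}_\L$ is a bijection. Applying Lemma \ref{covex lemma} with $f=\Phi_\L$ --- which is proper, closed, convex with open domain, and differentiable and strictly convex there --- upgrades this to a homeomorphism onto its image, namely $\mathfrak{S}_\L$. Combining the two, the displayed map is a homeomorphism between $\mathfrak{I}$ and $\mathfrak{S}_\L$.

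The proof is thus mostly an assembly of ingredients, and the substantive work lives in the claims already granted: the strict convexity and differentiability of $\Phi_\L$ (the ``standard calculation'') and the differentiation of $\log Z_{(\aaa),\L}$ under the integral that yields the gradient identity. The only assembly step needing care is matching the phrase ``${\rm Image}(\nabla\Phi_\L)$'' as used in Lemma \ref{covex lemma} and \cite{RV} with the definition of $\mathfrak{S}_\L$, i.e.\ making sure the gradient is evaluated only where $\Phi_\L$ is finite. This is automatic because the domain is open, so, exactly as in the proof of Lemma \ref{covex lemma}, the subdifferential (hence the gradient) is empty off $\mathfrak{I}$.
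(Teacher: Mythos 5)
Your proposal is correct and follows essentially the same route as the paper: the paper likewise identifies the displayed map with $\nabla\Phi_\L$ via differentiation of $\log Z_{(\aaa),\L}$, notes that $\Phi_\L$ is closed (Fatou), strictly convex, and differentiable on the open set $\mathfrak{I}$, and then cites \cite[Theorem B, p.~99]{RV} for the bijection onto ${\rm Image}(\nabla\Phi_\L)=\mathfrak{S}_\L$ and Lemma \ref{covex lemma} for the upgrade to a homeomorphism. Your added care about where the gradient exists matches the argument inside the paper's proof of Lemma \ref{covex lemma}, so nothing is missing.
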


\subsection{Characterization of $\mathfrak{S}_\L$}
In this subsection, we describe the set $\mathfrak{S}_\L$.
 It is easy to find some rough bounds:
\begin{Lem}
$\mathfrak{S}_\L 
\subset 
\left\{
    (\rho, \vec u, E)\in \R \times \R^3 \times \R: 
    -L \rho \leq E, \rho >0 \right
    \}$, for $L$ the stability constant as in \eqref{stability}.
\end{Lem}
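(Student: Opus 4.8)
The plan is to read off both inequalities directly from the defining identity of $\mathfrak{S}_\L$ in \eqref{frak{S}L}. Fix $(\m) \in \mathfrak{S}_\L$; by definition there is some $(\aaa) \in \mathfrak{I}$ with
\[
     (\rho, \vec u, E) = \E\left[\frac{1}{|\L|}(N, \vec P, H)\right],
\]
so in particular $\rho = \E[N]/|\L|$ and $E = \E[H]/|\L|$. Since $\E[\,\cdot\,]$ is integration against the probability measure $g_{(\aaa),\L}\o$, it is order preserving, and I expect both bounds to follow from pointwise estimates on $\mathfrak{X}_\L$ together with monotonicity of the expectation.

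For the lower bound on the energy I would use stability. The hypothesis \eqref{stability} together with the nonnegativity of the kinetic energy gives, on each sector $\L^n\times\R^{3n}$, the pointwise inequality
\[
     H(\tilde q,\tilde p) + L\,N(\tilde q,\tilde p)
     = \frac12\sum_{i=1}^n|p_i|^2 + \bigl(U(q_1,\ldots,q_n)+nL\bigr) \geq 0,
\]
and $H+LN=0$ trivially on the empty configuration. Applying $\E[\,\cdot\,]$ to this nonnegative function and dividing by $|\L|$ yields $E + L\rho \geq 0$, i.e.\ $-L\rho \leq E$. Thus this half of the statement is exactly the integrated form of the stability condition.

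For the positivity $\rho>0$ the only delicate point is strictness: since $N\geq 0$ everywhere one gets $\rho\geq 0$ at once, but to rule out $\rho=0$ one must check that the grand-canonical measure does not concentrate on the empty configuration. I would argue that
\[
     \E[N] = \sum_{n\geq 1} n\, g_{(\aaa),\L}\o(\{N=n\}) \geq g_{(\aaa),\L}\o(\{N\geq 1\}) \geq g_{(\aaa),\L}\o(\{N=1\}),
\]
and that the last quantity is strictly positive: by Lemma \ref{Ihere} (here $\beta<0$) the one-particle contribution $\int_{\L\times\R^3}\exp\bigl(\mu+\vec\l\cdot\vec p+\beta|\vec p|^2/2\bigr)\,dq\,dp$ to $Z_{(\aaa),\L}$ is finite and strictly positive, so $g_{(\aaa),\L}\o$ assigns positive mass to the sector $\{N=1\}$. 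Hence $\E[N]>0$ and $\rho>0$.

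The main obstacle, such as it is, is confined to this last strictness point; both inequalities are otherwise immediate consequences of applying monotonicity of expectation to elementary pointwise bounds, with the energy estimate being precisely stability in integrated form.
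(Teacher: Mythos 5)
Your proof is correct and follows essentially the same route as the paper: the energy bound is the integrated form of stability ($H \geq U \geq -LN$ under the expectation), and the strict positivity of $\rho$ comes from the strictly positive, finite mass of the one-particle sector, which is exactly what the paper invokes by ``working as in the proof of Lemma \ref{Ihere}.'' Your write-up is merely more explicit about the strictness point than the paper's.
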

\proof
For $\displaystyle
     \rho = 
     \E
     \left[ 
     \dfrac{N}{|\L|}
     \right]$, using the stability condition and working as in the proof of Lemma \ref{Ihere}, it follows that $0< \rho < + \infty$. Furthermore
\begin {equation}
\begin{split}
     \E\left[\dfrac{1}{|\L|}
     H\right]
     \geq
     \E\left[\dfrac{1}{|\L|}
     U \right]
     \geq
     \E\left[\dfrac{1}{|\L|}
     (-L N)\right] 
     =
     -L \rho, 
\end{split}
\end{equation}
after using \eqref{stability} again. 
\qed

\subsubsection{Essential Range of $(N,\vec P, H)$}
Theorem \ref{bijection} below will describe $\mathfrak{S}_\L$ in more detail after some preparation.  
Let $f$ be a measurable map between some measure space $X$ to $\R^n$.
Recall first that the essential range of $f$  is all points $x\in \R^n$ such that $\omega(f^{-1}(B)) >0$, for any neighborhood $B$ of $x$. (Alternatively, the essential range is the support of the distribution of $f$.) When $f$ is a function on $\mathfrak X_\L$ or $\L^n\times \R^{3n}$, unless otherwise specified, we assume the reference measure to be $\o$ or $\o_n$, respectively.

For $\L$ fixed and for each $n\geq0$, use the stability of $U$ to define $L_\L^{(n)}$ as the smallest constant satisfying  
\begin{equation}
\begin{split}
    - n L_\L^{(n)} \leq U(q_1,\ldots,q_n), 
    \ 
     \text{for almost all }(q_1,\ldots,q_n) \in \L^n
     \text{ (w.r.t.\ Lebesgue \ measure \ on \ } \L^n).
\end{split}
\end{equation}
%

\begin{center}
\begin{figure}
\begin{center} 
\begin{tikzpicture}
[domain=0:1.,xscale=1.5,yscale=1.5]
\path [fill=gray!30!white] 
    (0,0) 
   -- 
    (-1.1,1.11) 
    -- 
    (-1.2,1.91)--(-1,1.9)
    to 
    [out = -77, in= 135] (0,0);
\path [fill=gray!30!white] 
(-1,1.9) to [out = -77, in= 135] (0,0)--(2,0)--(2,1.9)--(-1,1.9);
\draw [gray!30!white] (-1,1.9)
 to [out = -77, in= 135] (0,0);
\draw[-] (0,0) 
    -- 
    (-1.1,1.11);
\draw[dashed] (0,0)-- (0,.63) node[right] {$- L^{(1)}_\L/ |\L|$};
\draw[fill] (0,.63) circle [radius=0.025];
\draw[dashed] (0,.63)--(-1.1,1.1);
\draw[fill] (-1.1,1.1) circle [radius=0.025]node[right] {$-2 L^{(2)}_\L/ |\L|$};
\draw[-]  (0,0)--(-1.2,0) ; 
\draw[domain=0:-2.]
      plot (\x, {-0.5*\x}) node[left]{$-L \rho$} ;
\draw[domain=-1.1:-1.2]
      plot (\x, {-8.*(\x + 1.1)- 1.2*(-1.1+ .7)+ .7*.9}) 
      ;
 \draw[<->] (0,2.5) node[left]{$\rho$}-- (0,0) -- (2.5,0) node[below] {$E$};
\end{tikzpicture}
\caption{The slice of $\mathfrak{S}_\L$ at $\vec u = \vec 0$. \label{Fig1}}
\end{center}
\end{figure}
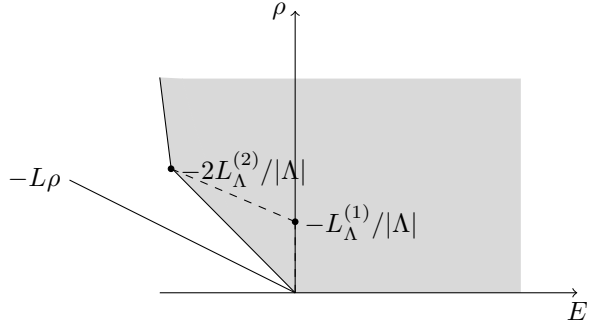 
\end{center} 

\begin{Notation}\label{notation}
From now on $\mathcal{E}_\L$ denotes the essential range of $(N,\vec P, H)/|\L|$ over $\mathfrak{X}_\L$, $\mathcal{C}_\L$ the convex hull of $\mathcal{E}_\L$, and $\mathring{\mathcal{C}}_\L$ the interior of $\mathcal{C}_\L$.

Similarly, for each $n\geq 0$, $\mathcal{E}_{\L,n}$ denotes the essential range of $(N,\vec P, H)/|\L|$ over $\L^n\times \R^{3n}$ and $\mathcal{C}_{\L,n}$ the convex hull of $\mathcal{E}_{\L,n}$. Notice that  $\mathcal C_\L$ is the convex hull of $\bigcup_{n\geq 0} \mathcal C_{\L,n}$. {Clearly, when $n=0$, $\mathcal{C}_{\L,0}=\mathcal{E}_{\L,0} = \left\{ (0,\vec 0, 0) \right\}$. }
\end{Notation}

\begin{Lem} \label{boundary is essential range}
For each $n\geq 1$, 
$\mathcal C_{\L,n} = \left\{
(\m)/|\L|: \rho=n, E\geq -nL_{\Lambda}^{(n)},
|\vec u|^2 \leq 2n(E+ nL_{\Lambda}^{(n)})
\right\}$.
\end{Lem}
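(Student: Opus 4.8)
The plan is to compute the essential range $\mathcal E_{\L,n}$ directly and then convexify, proving the two inclusions separately. Write $R_n := \{(\rho,\vec u, E): \rho = n,\ E\geq -nL_\L^{(n)},\ |\vec u|^2 \leq 2n(E+nL_\L^{(n)})\}$ for the unscaled candidate set; since division by $|\L|$ is a linear bijection it commutes with both ``essential range'' and ``convex hull'', so it suffices to show that $\mathcal C_{\L,n}$, computed before scaling, equals $R_n$. Note that on $\L^n\times\R^{3n}$ the first coordinate $N\equiv n$ is constant, and that $R_n$ is convex, being a translate of the epigraph of the convex quadratic $\vec u\mapsto |\vec u|^2/(2n)$.

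\emph{The inclusion $\mathcal C_{\L,n}\subseteq R_n$.} For every $(q,p)$ one has the elementary bound $\tfrac12\sum_{i}|p_i|^2\geq |\vec P|^2/(2n)$ (Cauchy--Schwarz, with equality iff all $p_i$ are equal), while $U(q)\geq -nL_\L^{(n)}$ for almost every $q$ by the definition of $L_\L^{(n)}$. Hence for almost every $(q,p)$, $H=\tfrac12\sum_i|p_i|^2+U(q)\geq |\vec P|^2/(2n)-nL_\L^{(n)}$, i.e.\ the image of $(N,\vec P,H)$ lies in the closed set $R_n$ off a null set. A point outside $R_n$ then has a neighbourhood whose preimage is contained in the null set $\{U<-nL_\L^{(n)}\}\times\R^{3n}$, so it is not in $\mathcal E_{\L,n}$; thus $\mathcal E_{\L,n}\subseteq R_n$, and by convexity of $R_n$ also $\mathcal C_{\L,n}\subseteq R_n$.

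\emph{The inclusion $R_n\subseteq\mathcal C_{\L,n}$.} I would first show that the boundary paraboloid $\partial R_n=\{(n,\vec v,|\vec v|^2/(2n)-nL_\L^{(n)})\}$ lies in $\mathcal E_{\L,n}$. Fix such a point and $\epsilon>0$. Because $-nL_\L^{(n)}$ is, by definition of $L_\L^{(n)}$, the essential infimum of $U$ on $\L^n$, the set $Q_\delta=\{q:U(q)<-nL_\L^{(n)}+\delta\}$ has positive Lebesgue measure for every $\delta>0$. At the equal-split momentum $p^\ast=(\vec v/n,\dots,\vec v/n)$ one has $\vec P(p^\ast)=\vec v$ and $\tfrac12\sum_i|p^\ast_i|^2=|\vec v|^2/(2n)$, so by continuity a small ball $B(p^\ast,r)$ maps into $\{|\vec P-\vec v|<\epsilon\}\cap\{|\tfrac12\sum_i|p_i|^2-|\vec v|^2/(2n)|<\delta\}$. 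For $\delta,r$ small the product $Q_\delta\times B(p^\ast,r)$ has positive $\o_n$-measure and is carried by $(N,\vec P,H)$ into the $\epsilon$-neighbourhood of the target point; hence the target lies in $\mathcal E_{\L,n}$. Finally $R_n=\mathrm{conv}(\partial R_n)$: given $(n,\vec v,E)\in R_n$, realising it as the midpoint of $(n,\vec v\pm t\vec e,\,|\vec v\pm t\vec e|^2/(2n)-nL_\L^{(n)})$ for a unit vector $\vec e$ produces two points of $\partial R_n$ whose average $E$-coordinate equals $|\vec v|^2/(2n)-nL_\L^{(n)}+t^2/(2n)$, sweeping out all values $\geq |\vec v|^2/(2n)-nL_\L^{(n)}$ as $t$ ranges over $[0,\infty)$. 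Thus $\mathcal C_{\L,n}\supseteq\mathrm{conv}(\partial R_n)=R_n$.

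Combining the two inclusions and rescaling by $1/|\L|$ finishes the proof. The main obstacle is the boundary computation in the second inclusion: the kinetic lower bound $\tfrac12\sum_i|p_i|^2\geq|\vec P|^2/(2n)$ is saturated only on the measure-zero set of equal momenta, so one cannot simply read the boundary off the essential range and must instead produce genuinely positive-measure neighbourhoods by combining the essential infimum of $U$ with momenta clustered near $p^\ast$. This is also why convexification is essential rather than cosmetic: for $n=1$ the kinetic energy is a deterministic function of $\vec P$, so $\mathcal E_{\L,1}$ is only the surface $\partial R_1$ and the interior of $R_1$ is recovered purely through the convex hull, whereas the boundary-inclusion argument above applies uniformly for all $n\geq1$.
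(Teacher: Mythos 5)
Your proof is correct and follows essentially the same route as the paper's: the outer inclusion via Cauchy--Schwarz plus the a.e.\ lower bound $U\geq -nL_\L^{(n)}$, and the reverse inclusion by showing the boundary paraboloid lies in the essential range (positive-measure sets of near-minimal $U$ times a small ball around the equal-split momentum) and then convexifying. The only differences are cosmetic: you make explicit the facts that the scaling by $1/|\L|$ commutes with taking essential ranges and convex hulls, and that $R_n=\mathrm{conv}(\partial R_n)$ via the midpoint construction, both of which the paper leaves implicit.
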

\begin{proof}
By definition of $L_\L^{(n)}$, for $\o_n$-almost all $(\tilde q,\tilde p)\in \L^n\times \R^{3n}$ we have
\begin{equation}
 H(\tilde q,\tilde p)
 =
 \frac12\sum_{i =1}^n |p_i|^2 +  U(q_1,\ldots,q_n)
 \geq 
 \frac12\sum_{i =1}^n |p_i|^2 -nL_\L^{(n)}.
\end{equation}
Using 
$\left| \sum_{k=1}^n p_i \right |^2 \leq n \sum_{i =1}^n |p_i|^2 $ we have that  
\begin{equation}
\mathcal{E}_{\L,n}
\subset 
\left\{(\m)/|\L|: 
\rho = n, 
E\geq -nL_{\Lambda}^{(n)},
|\vec u|^2 \leq 2n(E+ nL_{\Lambda}^{(n)})
\right\}.
\end{equation}
Let 
\begin{equation}
\mathcal I_{\Lambda,n}
=
\left\{
(\m):
\rho=n,
E\geq -nL_{\Lambda}^{(n)},
|\vec u|^2 = 2n(E+ nL_{\Lambda}^{(n)})
\right\}.
\end{equation}
It remains to prove that $ \mathcal I_{\Lambda,n}$ belongs to the essential range of $(N,\vec P, H)$.

Take any $(n,\vec u_0,E_0) \in  \mathcal I_{\Lambda,N}$. 
Fix any $\epsilon>0$.
We will show that the inverse image of the set
\begin{equation}
S_\epsilon = \left\{
(n,\vec u,E): |\vec u - \vec u_0| \leq \epsilon,  |E - E_0|\leq \epsilon
\right\}
\end{equation}
has nonzero $\omega_n$-measure. 

As $-nL_{\Lambda}^{(n)}$ is the essential infimum of $U(q_1,\ldots,q_n)$, we know that 
\begin{equation}
A_\epsilon = 
\left\{
(q_1,\ldots,q_n)\in \Lambda^n:
0\leq U(q_1,\ldots,q_n) + n L_{\Lambda}^{(n)} 
\leq  \dfrac {\epsilon}{2}
\right\}
\end{equation}
has nonzero Lebesgue measure.
Let 
\begin{equation}
B_\epsilon = 
\left\{
(p_1,\ldots,p_n)\in \R^n:
\left | \sum_k p_k - \vec u_0\right | \leq \epsilon,
\text{ and }
\left| \dfrac12\sum_{k} |p_k|^2 - \dfrac{|\vec u_0|^2}{2n} \right| \leq \dfrac {\epsilon}{2}
\right\}.
\end{equation}
Notice that the image of $A_\epsilon \times B_\epsilon$ under
$(N, \vec P, H)$ is a subset of $S_\epsilon$. It remains to show that $B_\epsilon$ has nonzero Lebesgue measure.
In fact, take $\tilde p_0 \in \R^{3n}$ such that 
\begin{equation}
\tilde p_0 
=
\dfrac 1n (\vec u_0,\ldots,\vec u_0).
\end{equation}
Clearly, $\tilde p_0 \in B_\epsilon$.
As $\sum_{k=1}^n p_k$ and $\sum_{k=1}^n |p_k|^2$ are continuous functions on $\R^{3n}$, then there exists $\delta>0$ such that 
\begin{equation}
B_{\epsilon,\delta}
:=\left \{ \tilde p \in \R^{3n} : 
|\tilde p - \tilde p_0 | \leq \delta
\right\}
 \subset B_\epsilon.
\end{equation}
As $B_{\epsilon,\delta}$ has nonzero Lebesgue measure, the proof is now complete.
\qed
\end{proof}

\begin{Rmk}
In fact, for $n\geq 2$, the essential range $\mathcal{E}_{\L,n}$ equals the convex hull $\mathcal{C}_{\L,n}$. The proof is similar to that of Lemma \ref{boundary is essential range}. As this is not relevant to the rest of the article, we omit the details.
\end{Rmk}

Let $\overline {\mathcal C_\L}$ be the closure of $\mathcal C_\L$. As $\displaystyle \mathcal{C}_\L = {\rm conv}\left(\bigcup_{n\geq 0} \mathcal{C}_{\L,n} \right)$, from the structure of $\mathcal C_{\L,n}$ we see that if $(\rho, \vec u, E)\in \overline {\mathcal C_\L}$, then $(\rho, \vec u, E+a)\in \overline {\mathcal C_\L}$ for any $a>0$. 
A useful consequence of this is: 
for any $(\rho_0, \vec u_0, E_0)\notin \overline {\mathcal C_\L}$,
we can find $(\alpha_0, \vec \alpha, \alpha_4)$
with $\alpha_4 > 0$ 
and $\varepsilon >0$ so that for all $(\m) \in \overline {\mathcal C_\L}$
\begin{equation} \label {eqn: out of convex hull separation}
\begin{split}
      \left[(\m) - (\rho_0, \vec u_0, E_0)\right]
      \cdot
      (\alpha_0, \vec \alpha, \alpha_4)
       > \varepsilon.
\end{split}
\end{equation}

\subsubsection{Results on $\mathfrak S_\L$}
Now we are ready to state
\begin{Thm} \label{bijection}
The set of solvability $\mathfrak{S}_\L$ is the interior of the convex hull of the essential range $\mathcal{E}_\L$, i.e.
$\mathfrak{S}_\L = \mathring{\mathcal{C}}_\L$.
\end{Thm}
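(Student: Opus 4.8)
The plan is to combine Proposition \ref{global bijection}, which identifies $\mathfrak{S}_\L$ with $\mathrm{Image}(\nabla\Phi_\L)$, with two set inclusions. Throughout write $Y := (N,\vec P,H)/|\L|$, so that $\mathcal{E}_\L$ is the essential range of $Y$ and $\nabla\Phi_\L(\aaa) = \E[Y]$ for $(\aaa)\in\mathfrak I$. Before anything else I would record that $\mathring{\mathcal C}_\L\neq\emptyset$: by Lemma \ref{boundary is essential range} the slices $\mathcal C_{\L,1}$ and $\mathcal C_{\L,2}$ are genuinely full-dimensional within their hyperplanes, and they sit at the two distinct density levels $1/|\L|$ and $2/|\L|$, so their convex hull---hence $\mathcal C_\L$---spans $\R\times\R^3\times\R$ and has nonempty interior. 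With this in hand I would prove $\mathfrak S_\L\subseteq\mathring{\mathcal C}_\L$ and $\mathring{\mathcal C}_\L\subseteq\mathfrak S_\L$ separately.

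For the first inclusion, fix $(\aaa)\in\mathfrak I$ and set $m = \E[Y]\in\mathfrak S_\L$. Since $g_{(\aaa),\L}>0$ everywhere, the measure $g_{(\aaa),\L}\o$ and $\o$ have exactly the same null sets, so $Y$ still has essential range $\mathcal E_\L$ under $g_{(\aaa),\L}\o$. As the barycenter of a probability measure always lies in the closed convex hull of its support, $m\in\overline{\mathcal C_\L}$. To upgrade this to the interior, suppose instead $m\in\partial\mathcal C_\L$; then there is a unit vector $\theta$ with $\theta\cdot(y-m)\leq 0$ for all $y\in\mathcal C_\L$, so in particular $\theta\cdot(Y-m)\leq 0$ holds $\o$-a.e. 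Taking expectation against $g_{(\aaa),\L}\o$ gives $\E[\theta\cdot(Y-m)] = \theta\cdot(m-m) = 0$; a nonpositive function with zero integral against a measure equivalent to $\o$ must vanish a.e., so $\mathcal E_\L$ lies in the hyperplane $\{\theta\cdot(y-m)=0\}$. This forces $\mathcal C_\L$ into that hyperplane, contradicting $\mathring{\mathcal C}_\L\neq\emptyset$. Hence $m\in\mathring{\mathcal C}_\L$.

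For the reverse inclusion, fix $m_0\in\mathring{\mathcal C}_\L$ and choose $\epsilon>0$ with $B(m_0,\epsilon)\subseteq\mathcal C_\L$. I would produce $(\aaa)\in\mathfrak I$ with $\nabla\Phi_\L(\aaa)=m_0$ by minimizing the strictly convex, smooth function $F(\aaa) := \Phi_\L(\aaa)-\aaa\cdot m_0 = \tfrac{1}{|\L|}\log\int_{\mathfrak X_\L}\exp\!\big(|\L|\,\aaa\cdot(Y-m_0)\big)\,d\o$ over $\mathfrak I$: any interior minimizer $\aaa^\ast$ satisfies $\nabla F(\aaa^\ast)=0$, i.e. $\nabla\Phi_\L(\aaa^\ast)=m_0$. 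The crux is to show that $F$ (extended by $+\infty$ off $\mathfrak I$) is coercive, equivalently that its recession function is strictly positive in every direction. A Laplace-type computation gives, for each $\theta\neq 0$, $\lim_{t\to\infty}F(\aaa_0+t\theta)/t = \sup_{y\in\mathcal C_\L}\theta\cdot(y-m_0)$, with value $+\infty$ when $\theta$ leaves the recession cone $\R\times\R^3\times\R_{\leq0}$ of $\mathfrak I$; the lower bound here comes from integrating only over a bounded, positive-measure set of configurations on which $\theta\cdot(Y-m_0)$ is close to its essential supremum. Since $B(m_0,\epsilon)\subseteq\mathcal C_\L$, this recession value is at least $\epsilon|\theta|>0$ for every $\theta\neq0$. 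Therefore the sublevel sets of $F$ are bounded and stay inside $\mathfrak I$, the infimum is attained at some $\aaa^\ast\in\mathfrak I$, and $m_0\in\mathfrak S_\L$.

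The main obstacle is this reverse inclusion, and within it the coercivity step: one must rule out $F$ escaping both to the boundary face $\{\beta=0\}$ and to infinity, and the cleanest route is the recession-function identity above. Its lower bound requires locating, for each direction $\theta$, genuine essential-range points $y$ with $\theta\cdot(y-m_0)$ nearly equal to the support function of $\mathcal C_\L$---and this is exactly the place where the hypothesis $m_0\in\mathring{\mathcal C}_\L$, rather than merely $m_0\in\overline{\mathcal C_\L}$, is indispensable.
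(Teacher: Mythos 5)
Your overall architecture is sound, and for the hard inclusion it is essentially the paper's own: the paper also fixes $(\m)\in\mathring{\mathcal{C}}_\L$ and looks for an interior critical point of $\log K(\aaa)=|\L|\left[\Phi_\L(\aaa)-(\aaa)\cdot(\m)\right]$, which is exactly $|\L|$ times your $F$. The differences are in implementation. For the easy inclusion, the paper goes through a particle-number decomposition (Lemma \ref{Lem: interior for N} and Theorem \ref{prop: in the interior}), whereas your barycenter-plus-supporting-hyperplane argument is more direct and equally valid; it does require $\mathring{\mathcal{C}}_\L\neq\emptyset$, which you correctly extract from Lemma \ref{boundary is essential range}. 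For the blow-up of $F$ at infinity, the paper proves Lemma \ref{LanfordLemma} (a separation bound made uniform over the unit sphere by compactness) and deduces $K(\aaa)\geq\varepsilon\exp\left(\varepsilon|(\aaa)|\right)$; your Laplace-type lower bound on the recession function, with uniformity supplied by the ball $B(m_0,\epsilon)\subseteq\mathcal{C}_\L$, accomplishes the same thing and is where the hypothesis $m_0\in\mathring{\mathcal{C}}_\L$ enters, just as in the paper.

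The gap is in your final step, where you assert that the recession-function identity also rules out ``escaping to the boundary face $\{\beta=0\}$'' and that bounded sublevel sets give attainment. Recession functions only control behavior along rays going to infinity; they say nothing about a minimizing sequence that stays bounded while $\beta_m\to 0^-$, i.e.\ approaches a finite boundary point of the open set $\mathfrak{I}$. Since $F$ is declared $+\infty$ off $\mathfrak{I}$, bounded sublevel sets yield a minimizer only if those sublevel sets are also closed, i.e.\ only if $F$ is lower semicontinuous at the face --- equivalently, $\Phi_\L(\aaa)\to+\infty$ as $(\aaa)\to(\mu_0,\vec\l_0,0)$. (Your ``coercive $\Leftrightarrow$ recession function positive'' equivalence itself presupposes this closedness.) Without it, the infimum could be approached at a face point where $F$ jumps to $+\infty$, and no interior critical point need exist. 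This is precisely why the paper inserts a separate Fatou's lemma computation showing $K\to+\infty$ as $(\aaa)$ approaches the hyperplane $\beta=0$ (and why closedness of $\Phi_\L$, again via Fatou, is recorded right after \eqref{freenergyfinitevol}): dropping the Gaussian damping $\beta H$ makes the momentum integrals diverge. Your proof needs this one additional estimate; once it is added, the argument is complete.
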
 
The rest of this subsection will give a proof for this theorem. To this end, we first prove a result for the essential range of $(N,\vec P,H)$ under more general probability measures on $\mathfrak{X}_\L$.

 \begin{Lem} \label{Lem: interior for N}
For $n \geq 1$ and $\nu$ any probability measure on $\L^n \times \R^{3n}$ which is absolutely continuous with respect to $\omega_n$,
we have that the average of $(N,\vec P, H)/|\L|$ under $\nu$ is in the {relative} interior of $\mathcal C_{\Lambda,n}$ 
(i.e. $\mathcal C_{\Lambda,n}$ is treated as a subset in $\R^4$) and therefore in the interior of $\mathcal C_{\Lambda}$ (as a subset in $\R^5$).
\end{Lem}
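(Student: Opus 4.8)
The plan is to reduce the assertion to a single strict inequality and then feed it into the explicit description of $\mathcal C_{\L,n}$ from Lemma \ref{boundary is essential range}. Since that lemma identifies $\mathcal C_{\L,n}$ with the solid paraboloid region $\{\rho=n,\ |\vec u|^2\le 2n(E+nL_\L^{(n)})\}/|\L|$ lying in the hyperplane $\{\rho=n/|\L|\}$, its relative interior (viewed in $\R^4$) is cut out by the corresponding \emph{strict} inequality. Writing $\vec u_\nu:=\int\vec P\,d\nu$ and $E_\nu:=\int H\,d\nu$ (integrals over $\L^n\times\R^{3n}$), and assuming $E_\nu<+\infty$ (otherwise the average is not a finite point and there is nothing to place), it suffices to prove
\begin{equation}
|\vec u_\nu|^2 < 2n\bigl(E_\nu + n L_\L^{(n)}\bigr),
\end{equation}
which automatically forces $E_\nu>-nL_\L^{(n)}$ since the left side is nonnegative.

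First I would record the chain giving the non-strict bound. Pointwise, Cauchy--Schwarz and the definition of $L_\L^{(n)}$ give $|\vec P|^2=\bigl|\sum_{i=1}^n p_i\bigr|^2\le n\sum_{i=1}^n|p_i|^2=2n(H-U)\le 2n\bigl(H+nL_\L^{(n)}\bigr)$; in particular $E_\nu<+\infty$ makes $\vec P\in L^2(\nu)$, so all averages are finite and the splitting $E_\nu=\int\tfrac12\sum_i|p_i|^2\,d\nu+\int U\,d\nu$ is legitimate. Applying Jensen's inequality to the convex map $\vec z\mapsto|\vec z|^2$ and then integrating the two pointwise inequalities yields
\begin{equation}
|\vec u_\nu|^2=\Bigl|\int\vec P\,d\nu\Bigr|^2\le\int|\vec P|^2\,d\nu\le 2n\int\tfrac12\sum_i|p_i|^2\,d\nu\le 2n\bigl(E_\nu+nL_\L^{(n)}\bigr).
\end{equation}
The crucial point, and the only place absolute continuity enters, is that the \emph{first} inequality is strict. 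Equality in Jensen's inequality for the strictly convex function $|\cdot|^2$ would force $\vec P=\sum_i p_i$ to equal some constant $\vec c$ for $\nu$-almost every configuration; but the level set $\{\vec P=\vec c\}$ is contained in an affine subspace of $\L^n\times\R^{3n}$ of codimension $3$, hence has $\o_n$-measure zero, so $\nu\ll\o_n$ gives it $\nu$-measure zero, and $\vec P$ cannot be almost surely constant. Thus the Jensen step is strict for every $n\ge1$, the displayed strict bound holds, and the average lies in the relative interior of $\mathcal C_{\L,n}$.

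It remains to promote relative-interior membership in the $4$-dimensional slice to genuine interior membership in $\mathcal C_\L\subset\R^5$, which is purely geometric. The point $x$ lies in the relative interior of the full-dimensional convex set $\mathcal C_{\L,n}$ inside $\{\rho=n/|\L|\}$, so I choose a $4$-simplex $b_0,\dots,b_4\in\mathcal C_{\L,n}$ with $x$ in its relative interior. Since $n\ge1$, both $\mathcal C_{\L,n-1}$ and $\mathcal C_{\L,n+1}$ are nonempty (recall $\mathcal C_{\L,0}=\{(0,\vec0,0)\}$), so I pick $y^{-}$ and $y^{+}$ in $\mathcal C_\L$ with $\rho$-coordinates $(n-1)/|\L|$ and $(n+1)/|\L|$, lying strictly on opposite sides of the hyperplane. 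Then $\mathrm{conv}(b_0,\dots,b_4,y^{+},y^{-})\subseteq\mathcal C_\L$ is a bipyramid over the $4$-simplex with apexes above and below; it is full-dimensional in $\R^5$, and $x$, sitting in the relative interior of the equatorial base, lies in its interior, so $x\in\mathring{\mathcal C}_\L$. The main obstacle is the strictness of the Jensen step: everything hinges on recognizing that an absolutely continuous $\nu$ cannot concentrate the total momentum $\vec P$ at a single value, and once this codimension argument is in place the rest is integrability bookkeeping and standard bipyramid geometry.
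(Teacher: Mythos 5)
Your proof is correct, but it takes a genuinely different route from the paper's. The paper argues by contradiction via separation: if the average $Q_0$ were on the boundary (or in the exterior) of $\mathcal C_{\L,n}$, the strict convexity of the solid paraboloid from Lemma \ref{boundary is essential range} supplies a vector $\vec\alpha$ with $\vec\alpha\cdot(Q-Q_0)>0$ for every $Q\neq Q_0$ in $\mathcal C_{\L,n}$; since the preimage of $Q_0$ lies in the momentum level set $\{\vec P=\vec P_0\}$, which is $\omega_n$-null and hence $\nu$-null, integration gives $\int\vec\alpha\cdot\left((N,\vec P,H)-Q_0\right)d\nu>0$, contradicting the definition of $Q_0$. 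You instead verify relative-interior membership directly: you read off from Lemma \ref{boundary is essential range} that the relative interior is the strict-inequality set $|\vec u|^2<2n(E+nL_\L^{(n)})$, and you obtain that strict inequality from Jensen applied to $|\cdot|^2$, strictness being positivity of the variance of $\vec P$. The two arguments hinge on the identical measure-theoretic fact --- an absolutely continuous $\nu$ cannot concentrate $\vec P$ at a single value, because $\{\vec P=\vec c\}$ has codimension $3$ --- so the essential insight is shared; what differs is the convex-analytic packaging (supporting hyperplane plus contradiction versus explicit inequality plus strict Jensen). Your version buys two things: it avoids having to justify the strict-separation property of the paraboloid, which the paper invokes tersely as ``strict convexity,'' and, more substantively, you carry out the final promotion from the relative interior of the slice $\mathcal C_{\L,n}$ in $\R^4$ to the interior of $\mathcal C_\L$ in $\R^5$, via the bipyramid over a $4$-simplex with apexes in $\mathcal C_{\L,n-1}$ and $\mathcal C_{\L,n+1}$ --- a step the paper's statement asserts (``and therefore in the interior'') but its proof never performs. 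Your remark that finiteness of $E_\nu$ forces $\vec P\in L^2(\nu)$, so that all the averages involved are automatically finite, is likewise a useful piece of bookkeeping the paper leaves implicit.
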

\begin{proof}
Let $\displaystyle Q_0 = (n,\vec P_0, H_0) = \int_{\L^n\times \R^{3n}} (N,\vec P, H) d\nu$. 
Assume that $Q_0/|\L|$ is on the boundary
 of $\mathcal C_{\Lambda,n}$ (as a subset of $\R^4$), then
$Q_0 \in \mathcal I_{\Lambda,n}$. By the strict convexity of $\mathcal C_{\Lambda,n}$, 
there exists $\vec \alpha \in \R^5 $ such that for all $Q \in\mathcal C_{\Lambda,N}$ such that $Q\neq Q_0$
\begin{equation}  \label {C_n convexity}
\vec \alpha \cdot (Q - Q_0) >0.
\end{equation}
Notice that the inverse image of $Q_0$ with respect to the map $(N,\vec P, H)$ is a subset of the set
\begin{equation}
\left\{
(\tilde q, \tilde p) \in \Lambda^n \times \R^{3n}:
\sum_{k=1}^n p_k = \vec P_0
\right\}
\end{equation}
which has zero $\omega_n$-measure, therefore also zero $\nu$-measure. We obtain now
\begin{equation} \label {integral > zero, convexity}
\int_{\L^n\times \R^{3n}}  \vec \alpha \cdot ( (N,\vec P, H) - Q_0) d\nu >0,
\end{equation}
clearly contradicing the fact $\displaystyle Q_0 = (n,\vec P_0, H_0) = \int_{\L^n\times \R^{3n}}  (N,\vec P, H) d\nu$. 

On the other hand, if $Q_0$ is in the exterior of $\mathcal C_{\Lambda,n}$ (as a subset of $\R^4$), by convexity of $\mathcal C_{\Lambda,n}$, we can find $\vec \alpha$ such that \eqref{C_n convexity} holds for all $Q \in\mathcal C_{\Lambda,N}$. Therefore we have \eqref{integral > zero, convexity} which leads again to a contradiction.  
\qed
\end{proof}

\begin{Thm} \label {prop: in the interior}
Let $\nu$ be any probability measure on $\mathfrak X_\L$ absolutely continuous with respect to $\omega$. 
Assume that $\nu(\{\emptyset \}) < 1$, i.e. $\nu$ is not the Dirac measure on the empty configuration.  
Then the average of $(N,\vec P, H)/|\L|$ is in the interior of $\mathcal C_{\Lambda}$.
\end{Thm}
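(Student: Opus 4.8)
The plan is to reduce to Lemma~\ref{Lem: interior for N} by conditioning on particle number. Write $c_n=\nu(\L^n\times\R^{3n})$, so that $\sum_{n\ge0}c_n=1$. The hypothesis $\nu(\{\emptyset\})=c_0<1$ guarantees some $n_0\ge1$ with $c_{n_0}>0$. I treat the average $\bar Q:=\int_{\mathfrak X_\L}(N,\vec P,H)/|\L|\,d\nu$ as a finite point of $\R^5$, as is implicit in the claim that it lies in $\mathring{\mathcal C}_\L\subset\R^5$; finiteness then passes to the conditional averages below.

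First I would isolate the sector $n_0$: the normalized restriction $\nu_1:=c_{n_0}^{-1}\,\nu|_{\L^{n_0}\times\R^{3n_0}}$ is a probability measure absolutely continuous with respect to $\o_{n_0}$, so Lemma~\ref{Lem: interior for N} places its average $x:=\int(N,\vec P,H)/|\L|\,d\nu_1$ in $\mathring{\mathcal C}_\L$. In particular $\mathring{\mathcal C}_\L\ne\emptyset$, so $\mathcal C_\L$ is full-dimensional and its interior coincides with its relative interior. If $c_{n_0}=1$ then $\bar Q=x$ and we are done, so assume $c_{n_0}<1$.

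Next I would handle the complement. Set $\nu_2:=(1-c_{n_0})^{-1}\,\nu|_{\mathfrak X_\L\setminus(\L^{n_0}\times\R^{3n_0})}$, a probability measure absolutely continuous with respect to $\o$, and $y:=\int(N,\vec P,H)/|\L|\,d\nu_2$. The key claim is $y\in\overline{\mathcal C_\L}$. This is the one step that is not a direct appeal to Lemma~\ref{Lem: interior for N}, since $\nu_2$ may spread mass over infinitely many sectors. I would argue by separation, exactly as in~\eqref{eqn: out of convex hull separation} applied with $(\rho_0,\vec u_0,E_0)=y$: if $y\notin\overline{\mathcal C_\L}$ there is a vector $(\alpha_0,\vec\alpha,\alpha_4)$ and $\varepsilon>0$ with $[(\m)-y]\cdot(\alpha_0,\vec\alpha,\alpha_4)>\varepsilon$ for every $(\m)\in\overline{\mathcal C_\L}$. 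Since $(N,\vec P,H)/|\L|\in\mathcal E_\L\subset\overline{\mathcal C_\L}$ holds $\o$-a.e.\ and hence $\nu_2$-a.e., integrating this strict inequality against $\nu_2$ yields $0=[y-y]\cdot(\alpha_0,\vec\alpha,\alpha_4)\ge\varepsilon>0$, a contradiction. This is the same ``the mean lies in the closed convex hull of the essential range'' mechanism already used inside the proof of Lemma~\ref{Lem: interior for N}.

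Finally, by construction $\bar Q=c_{n_0}\,x+(1-c_{n_0})\,y$ is a convex combination with $x\in\mathring{\mathcal C}_\L$, $y\in\overline{\mathcal C_\L}$, and weight $c_{n_0}\in(0,1)$. The line-segment principle of convex analysis---if $C$ is convex with $x\in\operatorname{int}C$ and $y\in\operatorname{cl}C$, then $(1-\lambda)x+\lambda y\in\operatorname{int}C$ for $\lambda\in[0,1)$, see \cite[Theorem~6.1]{Rock}---applied to $C=\mathcal C_\L$ and $\lambda=1-c_{n_0}$ gives $\bar Q\in\mathring{\mathcal C}_\L$, which is the assertion (and, via Theorem~\ref{bijection}, says $\bar Q\in\mathfrak S_\L$). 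I expect the only nonroutine step to be the complement estimate $y\in\overline{\mathcal C_\L}$: the decomposition, the single-sector input, and the concluding convex-combination step are bookkeeping or citations, modulo the implicit finiteness of $\bar Q$.
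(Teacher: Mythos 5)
Your proof is correct, and its skeleton is the paper's own: condition on particle number and feed the single-sector conditional averages into Lemma \ref{Lem: interior for N}. Where you differ is in the recombination step, and your version is tighter. The paper writes the average as a convex combination of the conditional sector averages $Q_{n_k}$, $n_k\geq 1$, each interior by the Lemma, and concludes in one line; this glosses over two points: (i) when $\nu(\{\emptyset\})>0$ the mixture also carries the empty-sector average $(0,\vec 0,0)$, which lies in $\mathcal C_\L$ but on its boundary, not in its interior, and (ii) the mixture is in general countably infinite, and ``an infinite convex combination of interior points is interior'' is not immediate---its proof requires precisely a closure-plus-line-segment argument. Your decomposition into one charged sector $n_0$ plus its complement handles both at once: you only need the complement average $y$ to lie in $\overline{\mathcal C_\L}$, which your separation argument via \eqref{eqn: out of convex hull separation} together with the $\o$-a.e.\ (hence $\nu_2$-a.e.) inclusion $(N,\vec P, H)/|\L|\in\mathcal E_\L\subset\overline{\mathcal C_\L}$ delivers, and then $\bar Q=c_{n_0}x+(1-c_{n_0})y$ is interior by the line-segment principle \cite[Theorem 6.1]{Rock}. (Your explicit treatment of finiteness of the averages, and its passage to conditional averages via stability, is also a point the paper leaves implicit.) In short: same key lemma, but your route supplies the convex-analytic bookkeeping that the paper's two-line proof omits; the cost is length, the gain is an argument that is actually complete for measures spread over infinitely many sectors and charging the empty configuration.
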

\proof
Let $Q =( \rho,\vec u, E)$ be the average of $(N,\vec P,H)/|\L|$ with respect to $\nu$. 
As $\nu$ is not the Dirac at the empty configuration, 
we know that $Q$ is a convex combination of some family of $\{Q_{n_k}\}_k$, $n_k\geq 1$, for $Q_n$ is the average of $(N,\vec P,H)/|\L|$ conditioned on configurations of $n$ particles.

From Lemma \ref{Lem: interior for N}, we know that all $Q_{n_k}$ are in the interior of $\mathcal C_{\Lambda}$, and then so is their convex combination $Q$.
\qed

For any $(\aaa)\in \mathfrak I$, apply  Theorem \ref{prop: in the interior} to the grand-canonical distribution $g_{(\aaa),\L}\o$ to conclude that $\mathfrak{S}_\L \subset \mathring{\mathcal{C}}_\L$. To prove Theorem \ref{bijection}, it remains to show that for any $(\m)\in \mathring{\mathcal{C}}_\L$, there exists $(\aaa)\in \mathfrak{I}$ such that \eqref{mainEqn} holds.
The contents of the following Lemma appear in \cite[Lemma A4.6]{L} without proof:
\begin{Lem} \label{LanfordLemma}
For any $(N_0, \vec P_0, H_0)\in |\L| \mathring{\mathcal{C}}_\L$, i.e.\,the interior of the convex hull of the essential range of $ (N, \vec P, H)$ over $\mathfrak X_\L$, 
there is $\varepsilon$ such that for all unit vectors $\vec e$ in $\R^5$ the following holds:
\begin{equation}\label{Lanford's}
\begin{split}
      \o\left(
       \left\{
       (\tilde q, \tilde p) \in \mathfrak{X}_\L:
       \left[(N, \vec P, H)(\tilde q, \tilde p) - (N_0, \vec P_0, H_0) \right]
       \cdot 
       \vec e
       \ > \varepsilon\
       \right\}
       \right)
       > \varepsilon.
\end{split}
\end{equation}
\end{Lem}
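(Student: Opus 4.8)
The plan is to combine the geometry of the interior point $Q_0:=(N_0,\vec P_0,H_0)/|\L|\in\mathring{\mathcal C}_\L$ with the defining property of the essential range, and then to upgrade the resulting one-direction estimate to a bound uniform in $\vec e$ by a compactness argument on the unit sphere $S^4\subset\R^5$. The goal is to produce $\eta_0>0$ and $m_0>0$, independent of $\vec e$, such that $\o(\{\,[(N,\vec P,H)/|\L|-Q_0]\cdot\vec e>\eta_0\,\})\ge m_0$ for every unit $\vec e$; a single $\varepsilon$ then falls out after rescaling.

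First, since $Q_0$ is interior to $\mathcal C_\L$, fix $r>0$ with the open ball $B(Q_0,r)\subset\mathcal C_\L$. Given a unit vector $\vec e$, the point $Q_0+\tfrac r2\vec e$ lies in $B(Q_0,r)\subset\mathcal C_\L=\mathrm{conv}(\mathcal E_\L)$, so it is a finite convex combination $\sum_i\lambda_i x_i$ of points $x_i\in\mathcal E_\L$ (one may invoke Carath\'eodory, but only finiteness is needed). Since $\sum_i\lambda_i(x_i-Q_0)\cdot\vec e=\tfrac r2$ while $\sum_i\lambda_i=1$ and $\lambda_i\ge0$, some index obeys $(x_i-Q_0)\cdot\vec e\ge\tfrac r2$; call that point $x^\ast=x^\ast(\vec e)\in\mathcal E_\L$.

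Next I would localize near $x^\ast$ and record a robustness property. For $y\in B(x^\ast,\tfrac r8)$ we have $(y-Q_0)\cdot\vec e\ge\tfrac r2-\tfrac r8=\tfrac{3r}8$, and by the definition of the essential range the preimage $(N,\vec P,H)^{-1}\!\big(|\L|\,B(x^\ast,\tfrac r8)\big)$ has strictly positive $\o$-measure, say $m(\vec e)>0$. The crucial point is that this single set simultaneously witnesses all nearby directions: the ball $B(x^\ast,\tfrac r8)$ is bounded, say contained in $B(Q_0,R^\ast)$ with $R^\ast=R^\ast(\vec e):=|x^\ast-Q_0|+\tfrac r8$, so for every unit vector $\vec f$ with $|\vec f-\vec e|<\tfrac{r}{8R^\ast}$ and every $y$ in the ball, $(y-Q_0)\cdot\vec f=(y-Q_0)\cdot\vec e+(y-Q_0)\cdot(\vec f-\vec e)>\tfrac{3r}8-R^\ast\cdot\tfrac{r}{8R^\ast}=\tfrac r4$. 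I expect this to be the main obstacle: because $(N,\vec P,H)$ is unbounded (in particular $H$ and $\vec P$ are), a positive-measure witness set for $\vec e$ need not witness a perturbed direction $\vec f$ unless it is confined to a fixed bounded region, and the localization to $B(x^\ast,\tfrac r8)$ is exactly what keeps the perturbation term $|y-Q_0|\,|\vec f-\vec e|$ controllable.

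Finally I would run the covering argument. As $\vec e$ ranges over $S^4$, the balls $\{\vec f:|\vec f-\vec e|<r/(8R^\ast(\vec e))\}$ form an open cover of the compact sphere; extract a finite subcover indexed by $\vec e^{(1)},\dots,\vec e^{(K)}$ and set $m_0:=\min_{1\le j\le K}m(\vec e^{(j)})>0$ and $\eta_0:=r/8$. Every unit vector lies in some cover element, whose witness set satisfies $[(N,\vec P,H)/|\L|-Q_0]\cdot\vec e>\tfrac r4>\eta_0$ throughout, so $\o(\{\,[(N,\vec P,H)/|\L|-Q_0]\cdot\vec e>\eta_0\,\})\ge m_0$ for all unit $\vec e$. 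Translating back to the unnormalized statement, take $\varepsilon:=\tfrac12\min\{|\L|\,\eta_0,\;m_0\}$: then $\varepsilon/|\L|\le\eta_0$, so the set in \eqref{Lanford's} contains $\{\,[(N,\vec P,H)/|\L|-Q_0]\cdot\vec e>\eta_0\,\}$ and hence has $\o$-measure at least $m_0>\varepsilon$, which is the desired conclusion.
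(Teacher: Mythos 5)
Your proof is correct and follows essentially the same route as the paper's: for each direction produce an essential-range witness lying a definite distance along $\vec e$, localize to a bounded neighborhood whose preimage has positive $\o$-measure so that the directional gain survives small perturbations of $\vec e$, and finish by compactness of the unit sphere. The only difference is cosmetic: you obtain the witness quantitatively from a convex-combination average of $Q_0+\tfrac r2 \vec e$ (yielding the uniform gain $r/2$ from the interior radius), whereas the paper uses a separation/contradiction argument with a direction-dependent gain $v$; both are absorbed by the final finite-cover minimum.
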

\proof
 Whenever $(N_0, \vec P_0, H_0)\in |\L| \mathring{\mathcal{C}}_\L$, for any  unit vector $\vec e$ there is some $(N_*, \vec P_*, H_*)$ in the essential range of $(N, \vec P, H)$ that belongs to the half-space through $(N_0, \vec P_0, H_0)$ with (inward) normal $\vec e$; otherwise the whole convex hull of the essential range would be on one side of $(N_0, \vec P_0, H_0)$, i.e.~$(N_0, \vec P_0, H_0)$ would not be in its interior. 

We then have 
\begin{equation}
     v 
     := 
     \left[
     (N_*, \vec P_*, H_*)- (N_0, \vec P_0, H_0)
     \right] \cdot \vec e > 0.
\end{equation}     
Let 
$\mathcal{N}$ be a neighborhood of $(N_*, \vec P_*, H_*)$ such that for all $(N', \vec P', H')$ in $\mathcal{N}$ we have
\begin{equation}
\begin{split}
     \left[
     (N', \vec P', H')- (N_0, \vec P_0, H_0)
     \right] \cdot \vec e
     > \frac{v}{2}.
\end{split}
\end{equation}
Then the $\o$ measure of the inverse image of $\mathcal{N}$ is not zero, say $m>0$.
Take $\varepsilon = \min\{v/2, m\}$ and notice that this $\varepsilon$ satisfies \eqref{Lanford's} for $\vec e$. 

By continuity of projection, for the {same} $\mathcal{N}$, there is a neighborhood $\mathcal{U}$ of $\vec e$ on the unit sphere such that for  
any unit vector $\vec e\, '$ in $\mathcal{U}$ and any $(N', \vec P', H')$ in $\mathcal{N}$ we have
\begin{equation}
\begin{split}
     \left[
     (N', \vec P', H')- (N_0, \vec P_0, H_0)
     \right] \cdot \vec e'
     > \frac{v}{4}.
\end{split}
\end{equation}
Redefining $\varepsilon = \min\{ v/4, m \}$, we now have \eqref{Lanford's} satisfied on $\mathcal{U}$.

By compactness of the unit sphere, we only need to repeat this finitely many times and take the smallest $\varepsilon$ to make sure that \eqref{Lanford's} is satisfied for all unit vectors.
\qed

We are ready to show that for any $(\m)\in \mathring{\mathcal{C}}_\L$, there exists $(\aaa)\in \mathfrak{I}$ such that \eqref{mainEqn} holds.
This step builds on \cite[pp.~44-47]{L}, see also \cite[\S 16]{Kh}.  For $(\rho, \vec u, E) \in  \mathring{\mathcal{C}}_\L$ fixed, let $K: \mathfrak{I} \to \R_{>0}$ be given by
\begin{equation}
\begin{split}
     K: (\mu, \vec\l, \beta)
     \mapsto
    \int_{\mathfrak{X}_\L} 
     \exp
     \left((\mu,\vec\l,\beta) 
     \cdot
     (N - \rho|\L|,\ \vec P - |\L|\vec u,\ H - |\L|E\ )\right)\
     d\o.
\end{split}
\end{equation}
{Clearly, $\log K = |\L| \left [  \Phi_\L - (\aaa)\cdot (\m) \right]$ and therefore
$\log K$ is strictly convex on $\mathfrak{I}$.
Note that
\begin{equation}
\begin{split}
      \nabla_{\aaa} \log K = 0
      \Leftrightarrow
      (\m) = \nabla_{\aaa} \Phi_\L (\aaa).
\end{split}
\end{equation}
To conclude there is unique critical point of $\log K$ in $\mathfrak{I}$, it suffices to show that $K$, and therefore $\log K$, goes to $+\infty$
as $(\aaa)$ approaches the boundary of $\mathfrak{I}$. Precisely, we show that $K(\aaa) \to \infty$ if $|(\aaa)| \to \infty$ or $(\aaa)\to (\mu_0,\vec \l_0,0)$ for any finite $\mu_0$ and $\vec \l_0$. 

To this end, we first apply Lemma \ref{LanfordLemma} for
 \begin{equation}
 (N_0, \vec P_0, H_0) =  |\L|(\rho, \vec u, E)
  \text{ and } 
 \vec e = (\mu,\vec\l,\beta)/|(\mu,\vec\l,\beta)|
 \end{equation}
 to find $\varepsilon>0$ such that for any $(\aaa)$, there exists a set of $\mathfrak{X}_\L$ of $\o$ measure at least $\varepsilon$ where 
\begin{equation}
\begin{split}
     (\mu,\vec\l,\beta)\cdot
     \left( 
     (N, \vec P, H) - |\L|(\rho, \vec u, E)
     \right) 
     > 
     \varepsilon |(\mu,\vec\l,\beta)| .
\end{split}
\end{equation}
It follows that
\begin{equation}
\begin{split}
     K(\mu, \vec \l, \beta) \geq 
     \varepsilon\ 
     \exp\left( |(\mu, \vec \l, \beta)|\varepsilon\right). 
\end{split}
\end{equation}
Therefore $K(\mu, \vec \l, \beta) \to + \infty$, as $|(\mu,\vec \l, \beta)| \to \infty$ while staying in $\mathfrak{I}$, and so does $\log K(\mu, \vec \l, \beta)$. 

Next, if $(\mu_m, \vec \l_m,\beta_m) \to (\mu, \vec \l, 0)$, as $m \to \infty$, Fatou's lemma gives
\begin{equation}
\begin{split}
     &\liminf_{m \to \infty}\int_{\mathfrak{X}_\L}
     \exp\left((\mu_m,\vec \l_m,\beta_m) 
     \cdot
     (N - \rho|\L|,\ \vec P - |\L|\vec u,\ H - |\L|E)\right)
     d\o\\
     \geq
     & \int_{\mathfrak{X}_\L}
     \exp\left((\mu,\vec \l) 
     \cdot
     (N - \rho|\L|,\ \vec P - |\L|\vec u)\right)\
     d\o = + \infty,
\end{split}
\end{equation}
showing that $\log K$ explodes when $(\aaa)$ approaches the hyperplane $\beta = 0$. 
The proof of Theorem \ref{bijection} is now complete.}

\subsection{Information Entropy and log-Partition Function} \label{conjugateoffree} We shall compare eventually the information entropy $s_\L$ with the entropy that appears at the thermodynamic limit using convex conjugates. For this define now
the convex conjugate of $\Phi_\L$ from \eqref{freenergyfinitevol}, 
\begin{equation}
\begin{split}
     \Phi^*_\L(\m) 
     = 
     \sup_{(\aaa)\in \R^5}
     \left[ 
     (\m) \cdot (\aaa) - \Phi_\L(\aaa)
     \right].     
\end{split}
\end{equation}
(For standard convex analysis consult \cite{Rock}.)
The following describes the relation of $\Phi_\L$ to $s_\L$.
\begin{Prop} \label{PhiStar}
$\Phi^*_\L$ takes values as follows:
\begin{enumerate}
\item \label{first}
For $(\rho, \vec u, E)$ in $\mathfrak{S}_\L$,
\begin{equation}
\begin{split} \label{entropyAsConjugate}
   \Phi^*_\L(\m) 
   = 
   s_\L
   \left( 
   g_{(\aaa),\L} 
   \o
   \right)
   =
   (\m) \cdot (\aaa) - \Phi_\L(\aaa),
\end{split}
\end{equation}
where $(\aaa)\in \mathfrak I$ is the unique solution of 
\begin{equation}  \label{basic equation}
\begin{split}
       \E \left[
       \frac{(\M)}{|\L|} \right] 
       = 
       (\rho, \vec u, E).
\end{split}
\end{equation}
\item  \label{second}
For $(\rho, \vec u, E) \notin \overline{\mathfrak{S}_\L}$,  $\Phi^*_\L(\m) =  +\infty$ .
\item \label{third}
And for $(\rho, \vec u, E)  \in \partial (\mathfrak{S}_\L)$, $\Phi^*_\L(\m)$ is the limit along any interior segment as we approach $(\m)$:
\begin{equation}
\begin{split}
      \Phi^*_\L(\m)
      =
      \lim\limits_{t\to 1} 
     \Phi^*_\L((1-t) (\rho', \vec u', E')+ t(\rho, \vec u, E)),
\end{split}
\end{equation}
for any  $(\rho', \vec u',E')\in \mathfrak{S}_\L$. 
\end{enumerate}
\end{Prop}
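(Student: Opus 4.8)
The plan is to treat the three cases \eqref{first}, \eqref{second}, \eqref{third} separately, using throughout the strict convexity and differentiability of $\Phi_\L$ on the open set $\mathfrak I$ together with the global homeomorphism of Proposition \ref{global bijection}. For \eqref{first}, I would first observe that $(\aaa)\mapsto (\m)\cdot(\aaa) - \Phi_\L(\aaa)$ is concave on $\mathfrak I$ (and equal to $-\infty$ off $\mathfrak I$, since $\Phi_\L = +\infty$ there), so any interior critical point is automatically a global maximizer. Its gradient vanishes precisely when $\nabla_{\aaa}\Phi_\L(\aaa) = (\m)$, which by the identity $\E[(\M)/|\L|] = \nabla_{\aaa}\Phi_\L$ is exactly \eqref{basic equation}; since $(\m)\in\mathfrak S_\L$, Proposition \ref{global bijection} supplies a unique solution $(\aaa)\in\mathfrak I$, and strict convexity makes the maximizer unique. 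This gives $\Phi^*_\L(\m) = (\m)\cdot(\aaa) - \Phi_\L(\aaa)$. The identification with $s_\L(g_{(\aaa),\L}\o)$ then follows by direct computation: $g_{(\aaa),\L}$ is its own $\o$-density, so $s_\L(g_{(\aaa),\L}\o) = \tfrac1{|\L|}\int g_{(\aaa),\L}\log g_{(\aaa),\L}\,d\o$, and inserting $\log g_{(\aaa),\L} = (\aaa)\cdot(\M) - \log Z_{(\aaa),\L}$ and evaluating the expectation via \eqref{basic equation} yields $s_\L(g_{(\aaa),\L}\o) = (\aaa)\cdot(\m) - \Phi_\L(\aaa)$, as in the Entropy Minimization paragraph of \S\ref{constrained mini on L}.

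For \eqref{second}, let $(\m)\notin\overline{\mathfrak S_\L}$; since $\mathfrak S_\L = \mathring{\mathcal C}_\L$ is convex with nonempty interior, $\overline{\mathfrak S_\L} = \overline{\mathcal C_\L}$, so $(\m)\notin\overline{\mathcal C_\L}$. I would apply the separation \eqref{eqn: out of convex hull separation}, which yields a vector $(\alpha_0,\vec\alpha,\alpha_4)$ with $\alpha_4 > 0$ and $\varepsilon > 0$ such that, because the essential range $\mathcal E_\L$ lies in $\overline{\mathcal C_\L}$, the inequality $[(\M)/|\L| - (\m)]\cdot(\alpha_0,\vec\alpha,\alpha_4) > \varepsilon$ holds $\o$-almost everywhere. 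The key point is that the test direction $(\aaa) = -t(\alpha_0,\vec\alpha,\alpha_4)$ has negative last coordinate $-t\alpha_4 < 0$ for every $t > 0$, hence lies in $\mathfrak I$. Writing the objective as
\[
(\m)\cdot(\aaa) - \Phi_\L(\aaa) = -\frac1{|\L|}\log\int_{\mathfrak X_\L}\exp\left((\aaa)\cdot\left[(\M) - |\L|(\m)\right]\right)\,d\o,
\]
it suffices to show the integral $Z_t$ tends to $0$ as $t\to\infty$, which forces $\Phi^*_\L(\m) = +\infty$.

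The heart of \eqref{second}, and the step I expect to be the main obstacle, is that $\o$ is an infinite measure, so the pointwise bound $\exp((\aaa)\cdot[(\M) - |\L|(\m)]) < \exp(-t|\L|\varepsilon)$ does not by itself control $Z_t$ (its crude integral is $\exp(-t|\L|\varepsilon)\cdot\o(\mathfrak X_\L) = +\infty$). I would resolve this by fixing one value $t_0 > 0$: the parameter $-t_0(\alpha_0,\vec\alpha,\alpha_4)$ lies in $\mathfrak I$, so $Z_{t_0}$ is finite by Lemma \ref{Ihere}, and for $t > t_0$ the almost-everywhere inequality gives, pointwise, $\exp(-t(\alpha_0,\vec\alpha,\alpha_4)\cdot[(\M) - |\L|(\m)]) \le \exp(-(t-t_0)|\L|\varepsilon)\exp(-t_0(\alpha_0,\vec\alpha,\alpha_4)\cdot[(\M) - |\L|(\m)])$. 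Integrating yields $Z_t \le \exp(-(t-t_0)|\L|\varepsilon)\,Z_{t_0}\to 0$, so the objective tends to $+\infty$ and $\Phi^*_\L(\m) = +\infty$.

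Finally, for \eqref{third} I would use that $\Phi^*_\L$, being a convex conjugate, is automatically closed (lower semicontinuous) and proper convex, and is finite on $\mathfrak S_\L$ by \eqref{first}. Fix $(\m)\in\partial\mathfrak S_\L$ and $(\rho', \vec u', E')\in\mathfrak S_\L$, and set $h(t) = \Phi^*_\L((1-t)(\rho', \vec u', E') + t(\m))$. By convexity of $\mathfrak S_\L = \mathring{\mathcal C}_\L$ the open segment lies in $\mathfrak S_\L$, so $h$ is finite and convex on $[0,1)$ and its left limit $\ell$ at $1$ exists in $(-\infty, +\infty]$. Lower semicontinuity gives $\Phi^*_\L(\m)\le\ell$, while the convexity estimate $h(t)\le (1-t)h(0) + t\,\Phi^*_\L(\m)$ with $t\to1^-$ gives $\ell\le\Phi^*_\L(\m)$; the two inequalities combine to the asserted limit formula.
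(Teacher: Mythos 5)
Your proposal is correct and follows essentially the same route as the paper: the critical-point equation combined with Proposition \ref{global bijection} for part \eqref{first}, the separation inequality \eqref{eqn: out of convex hull separation} driving a test direction to infinity for part \eqref{second}, and the segment-continuity property of closed convex functions for part \eqref{third}. The only differences are cosmetic: in part \eqref{second} the paper secures integrability against the infinite measure $\o$ by shifting the test vector to $-\vartheta(\alpha_0,\vec\alpha,\alpha_4)-(0,\vec 0,1)$ so that the fixed $\beta=-1$ component does the work, whereas you split off a fixed multiple $t_0$ of the separating vector and invoke Lemma \ref{Ihere}; and in part \eqref{third} you reprove the needed fact directly rather than citing \cite[Theorem 7.5]{Rock}.
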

\proof
For \eqref{first} notice that the critical point equation of  
\begin{equation}
\begin{split}
    (\mu, \vec \l, \beta) 
    \mapsto
    (\m) \cdot (\aaa) - \Phi_\L(\aaa) 
\end{split}
\end{equation}
is
\begin{equation}
\begin{split}
    (\m) = \nabla_{(\aaa)} \Phi_\L(\aaa)
    =
    \E\left[ \frac1{|\L|}(N, \vec P, H)\right], 
\end{split}
\end{equation}
which, by Theorem \ref{bijection} has solution if and only if $(\rho, \vec u, E)$ is in $\mathfrak{S}_\L$.

For \eqref{second}, when $(\rho, \vec u, E)$ is not in the $\overline{\mathfrak{S}_\L}$, by \eqref{eqn: out of convex hull separation}, we find $(\alpha_0, \vec \alpha, \alpha_4)$
with $\alpha_4 > 0$ 
and $\varepsilon >0$ so that almost always
\begin{equation}
\begin{split}
      \left[(N, \vec P, H) - |\L|(\rho, \vec u, E)\right]
      \cdot
      (\alpha_0, \vec \alpha, \alpha_4)
       > \varepsilon.
\end{split}
\end{equation}
Notice that for any $(\aaa)\in \mathfrak{I}$,
\begin{equation}
\begin{split}
&(\m) \cdot (\aaa) - \Phi_\L(\aaa) \\
&\quad \quad=
-\frac1{\V}  \log  
 \int_{\mathfrak{X}_\L}   
        \exp  \big\{ 
         \left[(N, \vec P, H) - |\L|(\rho, \vec u, E)\right]
      \cdot
      (\aaa)
         \big\} d\o .
\end{split}
\end{equation}
Then for $\vartheta>0$,
 with choosing 
 $(\aaa) = - \vartheta (\alpha_0, \vec \alpha, \alpha_4) - (0,\vec 0, 1)$,
 we have
\begin{equation}
\begin{split}
 \Phi^*_\L(\m) 
 &\geq (\m) \cdot (\aaa) - \Phi_\L(\aaa) \\
        &\geq
        -\frac1{\V} 
        \log 
        \int_{\mathfrak{X}_\L}   
        \exp\left\{-\vartheta \varepsilon -(H- |\L|E)\right\} \ d\o\\
        &=
        \frac1{\V}\vartheta \varepsilon
        -
        \frac1{\V} 
        \log 
        \int_{\mathfrak{X}_\L}   
        \exp\left\{ -(H- |\L|E)\right \}d\o.
\end{split}
\end{equation}
Letting $\vartheta \to \infty$, we conclude that $\Phi_\L^* = +\infty$ whenever $(\m)$ is not in $\overline{\mathfrak{S}_\L}$.
 
For \eqref{third}, when $(\m)$ is on the boundary of the convex hull, $\Phi_\L^*(\m)$ is the limit of the values along any segment that has $(\m)$ as end point and lies in the interior of the convex hull otherwise, as is the case for any convex closed function, see \cite[Theorem 7.5]{Rock}. (For similar results in a different setting and with different proofs, see \cite[Theorem 3.4.]{WJ}.)
\qed

\section{Thermodynamic Limits}\label{Limits}
We now recall the definition of microcanonical thermodynamic limit entropy.
In the next section we will see how it relates to information entropy on bounded domains. This section uses Martin-L\"of's formalism in \cite{M-L} for thermodynamic limits as the volume of the domain becomes infinite, see also \cite{L}\footnote{Note that the formulas in \cite{L} and \cite{M-L} are related as follows: the microcanonical thermodynamic limit entropy at $(\rho, E)$ in \cite{M-L} equals the microcanonical thermodynamic limit entropy in \cite{L} at $(\rho, E/\rho)$ multiplied by $\rho$.}. 

As already mentioned, we shall compare entropies via their convex conjugates. We examined convex conjugates of information entropy in the previous section. For the microcanonical thermodynamic limit entropy it is well known that its convex conjugate is related to the (grand canonical) thermodynamic limit pressure, the limit of the log-partition functions $\Phi_\L$'s as $\L$ increases. We are especially interested in the strict convexity of pressure: we use it in this section for the differentiability of the limit entropy (which we shall use in the main result of the next section) and we also use it to show a one-to-one correspondence between thermodynamic parameters and macroscopic quantities at the thermodynamic limit in section \ref{localbijection}.

\subsection{Microcanonical Entropy}
We first introduce the temperedness condition for interaction potential:
$U$ is called {\bf tempered} if for some constants $\delta$ greater than the space dimension, $K>0$, and $R>0$ it holds that, for $\tilde q_1$ and $\tilde q_2$ configurations consisting of $N_1$ and $N_2$ position points respectively,
\begin{equation} \label {condition: temperedness}
\begin{split}
    \left|U\left(\tilde q_1, \tilde q_2\right)
    -
    U\left(\tilde q_1\right) - U\left(\tilde q_2\right)\right|
    \leq
    K \frac{N_1 N_2}{ \left( d\left(\tilde q_1, \tilde q_2\right) \right)^\delta}, 
\end{split}
\end{equation}
whenever $\tilde q_1$ and $\tilde q_2$  have distance $d\left(\tilde q_1, \tilde q_2\right) >R$. (Lennard-Jones type potentials and finite range interactions are tempered.)

To define the microcanonical entropy, we will need to specify a special sequence of $\L$'s that increases to infinity.

\begin{Notation} \label{notation2}
For any $l \in \N$, $\L_l$ will denote a box in $\R^3$ with sides of length $2^l$. And $\L'_l \subset \L_l$ will denote a smaller box with side $2^l - 2 R_l$, where $R_l = R_0 2^{\rho l}$, for some fixed $R_0$ and $\rho \in [0,1)$. (The choice of $R_0$ and $\rho$ depends on the interaction potential $U$ and $\rho = 0$ is when $U$ is of finite range, cf. \cite[p.~105, p.~88]{M-L}.)  
\end{Notation}

For any $(\m)\in \R^5$, we fix a sequence of open convex sets $\left\{A_k\right\}_{k\in \N}$ such that $A_k$ shrinks to $(\m)$. For interactions potentials which are both stable and tempered, the microcanonical entropy at $(\m)$ is defined as
\begin{equation}
s(\m) =\lim_{k\to \infty} \lim_{l\to \infty}  
\frac{1}{|\L_l|} \log 
 \o\left(
       \left\{
       (\tilde q, \tilde p) \in \mathfrak{X}_{\L'_l}:
      \dfrac {1}{|\L_l |} (N, \vec P, H)(\tilde q, \tilde p) \in A_k
       \right\}
       \right),
\end{equation}
cf.~\cite[\S 3.4.2]{M-L}.
We also know that $s(\m)$ is an upper-semi continuous concave function on $\R^5$ and that 
$s(\m)<\infty$, cf.~\cite[p.~45,p.~96]{M-L}. 
Therefore the set ${\rm dom}(s) = \{(\m): s(\m) > - \infty \}$, i.e.\,the domain of $s$, is convex. We will give a description of {\rm int\,dom}(s), i.e.\,the interior of domain of $s$, in Theorem \ref{interiors}. To this end, we will need the following notation:
\begin{Notation} \label{notation 3}
$\mathcal{E}'_{\L_l}$ will denote the essential range of $\displaystyle \frac1{|\L_l|}(N,\vec P, H)$ as a map on $\frak{X}_{\L'_l}$. (Compared with $\mathcal{E}_{\L_l}$ as in Notation \ref{notation}, $\mathcal{E}'_{\L_l}$ takes into account only configurations in $\L'_l$.) Also $\rm{conv} (\mathcal{E}'_{\L_l})$ denotes the convex hull of $\mathcal{E}'_{\L_l}$ and $\rm{int\, conv} (\mathcal{E}'_{\L_l})$ is the interior of $\rm{conv} (\mathcal{E}'_{\L_l})$.  As $\mathcal{E}'_{\L_l} = \dfrac{\L'_l}{\L_l} \mathcal{E}_{\L'_l}$, from the proof of Lemma \ref{boundary is essential range}, it is easy to see that 
\begin{equation} \label {conv prime vs without}
\rm{conv} (\mathcal{E}'_{\L_l})
=
\dfrac{|\L'_l|}{|\L_l|} \rm{conv} (\mathcal{E}_{\L'_l})
=
\dfrac{|\L'_l|}{|\L_l|} \mathcal C_{\L'_l}.
\end{equation}
\end{Notation}
In this notation, the following will be crucial in the proof of Theorem \ref{interiors}: 
\begin{Lem}
For stable, tempered interaction potentials the 
domain of $s$ is related to $\mathcal{E}'_{\L_l}$ as follows:
\begin{equation} \label{closure union}
\begin{split}
    \overline{{\rm dom}(s)}
    =
    \overline{\,
    \bigcup_l \mathcal{E}'_{\L_l}}.
\end{split}
\end{equation}
\end{Lem}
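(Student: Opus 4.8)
The plan is to prove the two inclusions in \eqref{closure union} separately. Throughout write $f_l := \frac{1}{|\L_l|}(\M)$ for the intensive macroscopic map viewed on $\mathfrak X_{\L'_l}$, so that (Notation \ref{notation 3}) $\mathcal E'_{\L_l}$ is exactly the essential range of $f_l$ and
\[
   s(\m)=\lim_{k\to\infty}\ \lim_{l\to\infty}\ \frac1{|\L_l|}\log \o\!\left(f_l^{-1}(A_k)\right)
\]
for neighborhoods $A_k$ shrinking to $(\m)$.

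For $\overline{{\rm dom}(s)}\subseteq\overline{\bigcup_l \mathcal E'_{\L_l}}$ I would argue by contraposition. If $(\m)\notin\overline{\bigcup_l \mathcal E'_{\L_l}}$ there is an open neighborhood $A\ni(\m)$ disjoint from every $\mathcal E'_{\L_l}$. Fix $l$: the complement of an essential range is by definition the union of all open sets with $\o$-null $f_l$-preimage, and since $\R^5$ is second countable this union admits a countable subcover, so the whole complement---hence $A$---has $\o$-null preimage. Therefore $\frac1{|\L_l|}\log\o(f_l^{-1}(A))=-\infty$ for every $l$, whence $s\equiv-\infty$ on $A$ (any point of $A$ has shrinking neighborhoods eventually inside $A$). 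Thus $A\cap{\rm dom}(s)=\emptyset$ and $(\m)\notin\overline{{\rm dom}(s)}$.

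For $\overline{\bigcup_l \mathcal E'_{\L_l}}\subseteq\overline{{\rm dom}(s)}$, since the right side is closed it suffices to place $\bigcup_l\mathcal E'_{\L_l}$ inside it; fix $(\m)\in\mathcal E'_{\L_{l_0}}$. The substantive input is the superadditivity (``gluing'') estimate behind the thermodynamic limit. A box $\L_{l+1}$ tiles into $2^3$ translates of $\L_l$, whose inner boxes $\L'_l$ are mutually separated by buffers of width $\sim R_l=R_02^{\rho l}$, eventually exceeding $R$. Filling each inner box with an independent configuration whose intensive value $f_l$ lies in a fixed neighborhood $A$ of $(\m)$ produces a configuration of $\L'_{l+1}$ whose intensive value still lies in $A$: particle number, momentum and kinetic energy are additive, while the inter-box potential energy is subextensive by temperedness \eqref{condition: temperedness} once the buffers are wide (this is where $\rho<1$ and $\delta>3$ enter). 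Counting these configurations gives, schematically,
\[
   \frac1{|\L_{l+1}|}\log\o\!\left(f_{l+1}^{-1}(A)\right)
   \ \geq\
   \frac1{|\L_l|}\log\o\!\left(f_l^{-1}(A)\right)-\varepsilon_l ,
\]
where, by the Martin-L\"of choice of the buffer width $R_l$ with $\rho<1$, the corrections satisfy $\sum_l\varepsilon_l<\infty$; iterating from the scale $l_0$ then yields $\liminf_l\frac1{|\L_l|}\log\o(f_l^{-1}(A))\geq\frac1{|\L_{l_0}|}\log\o(f_{l_0}^{-1}(A))-C$ with $C<\infty$.

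Since $(\m)\in\mathcal E'_{\L_{l_0}}$ means $\o(f_{l_0}^{-1}(A))>0$ for every neighborhood $A$, this lower bound is finite for every $A$; at an interior point of some ${\rm conv}(\mathcal E'_{\L_l})$ the surviving measure does not degenerate as $A$ shrinks, so such points lie in ${\rm dom}(s)$. Finally, by \eqref{conv prime vs without} and because the admissible densities $n/|\L_l|$ become dense as $l\to\infty$, every neighborhood of $(\m)$ meets ${\rm int\,conv}(\mathcal E'_{\L_l})$ for large $l$, so $(\m)\in\overline{{\rm dom}(s)}$. I expect the gluing estimate to be the main obstacle: one must check that the inter-box interaction energy, summed over the geometric arrangement, stays subextensive after division by $|\L_{l+1}|$, so that the macroscopic energy constraint survives the combination and the corrections $\varepsilon_l$ are genuinely summable---this is precisely the Ruelle--Martin-L\"of bookkeeping of the boundary layer $R_l$, and is where stability and temperedness are both indispensable. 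The reduction from boundary points of the essential range to nearby interior points, and the control of the $k\to\infty$ limit, are routine once the uniform lower bound holds.
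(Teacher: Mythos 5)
The paper itself gives no argument for this lemma: its ``proof'' is a citation to Martin-L\"of (\cite[p.~93, p.~95]{M-L} for finite range and \cite[pp.~105--111]{M-L} for the tempered case), where exactly the superadditive gluing machinery you sketch is carried out. So the question is whether your reconstruction stands on its own. Your first inclusion, $\overline{{\rm dom}(s)}\subseteq\overline{\bigcup_l\mathcal{E}'_{\L_l}}$, is correct and self-contained: the Lindel\"of argument showing that the complement of the essential range of $f_l:=\frac1{|\L_l|}(N,\vec P,H)$ has $\o$-null preimage, hence that $s\equiv-\infty$ on any open set disjoint from every $\mathcal{E}'_{\L_l}$, is sound.

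The gap is in the converse inclusion, at the step where you pass from ``the lower bound is finite for every fixed neighborhood $A$'' to the existence of points of ${\rm dom}(s)$ near $(\m)$. Since $s$ is a double limit (first $l\to\infty$, then the neighborhood shrinks), finiteness of $\liminf_l\frac1{|\L_l|}\log\o\bigl(f_l^{-1}(A)\bigr)$ for each fixed $A$ does not by itself produce any single point at which $s>-\infty$. You bridge this by asserting that ``at an interior point of some ${\rm conv}(\mathcal{E}'_{\L_l})$ the surviving measure does not degenerate as $A$ shrinks''; but that assertion \emph{is} the finiteness of $s$ on ${\rm int\,conv}(\mathcal{E}'_{\L_l})$, which is the hard content of the cited Martin-L\"of result --- in the paper it is part \ref{union interiors} of Theorem \ref{interiors}, which is deduced \emph{from} this lemma. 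As written, your argument is circular, and the companion claim that every neighborhood of $(\m)$ meets ${\rm int\,conv}(\mathcal{E}'_{\L_l})$ for large $l$ is also unproven (essential ranges at different scales are not nested, so this needs its own gluing argument). A non-circular fix is a compactness/covering step: if $s\equiv-\infty$ on the compact set $\overline{A}$, then each $y\in\overline{A}$ has a neighborhood $U_y$ with $\lim_l\frac1{|\L_l|}\log\o\bigl(f_l^{-1}(U_y)\bigr)<-M$; a finite subcover $U_{y_1},\dots,U_{y_m}$ gives $\o\bigl(f_l^{-1}(A)\bigr)\leq m\max_i\o\bigl(f_l^{-1}(U_{y_i})\bigr)$, hence $\liminf_l\frac1{|\L_l|}\log\o\bigl(f_l^{-1}(A)\bigr)\leq-M$, contradicting your uniform lower bound once $M$ is large. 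This places a point of ${\rm dom}(s)$ in $\overline{A}$, and shrinking $A$ gives $(\m)\in\overline{{\rm dom}(s)}$ without ever invoking finiteness of $s$ on interiors of convex hulls. (Your superadditivity estimate also needs the standard care that the target set be convex and slightly enlarged at each iteration, with summable enlargements; you flag this correctly as Martin-L\"of bookkeeping, so I count it as incomplete but not wrong.)
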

\begin{proof}
See \cite[p.~93, p.~95]{M-L} for finite range, stable interaction potentials and \cite[pp.~105--111]{M-L} for infinite range, stable, tempered interaction potentials. Note that the claim there is for $\mathcal{E}_\L$ instead of $\mathcal{E}'_{\L_l}$ (in our notation).\qed
\end{proof}


\subsection{Pressure} \label{Pressure}
For any $(\aaa) \in \R^5$, define now the thermodynamic limit pressure by 
\begin{equation}  \label{PhitoP}
\begin{split}
     \Xi(\aaa) 
      =
      \lim_{\L \to \infty} \Phi_\L(\aaa)
\end{split}
\end{equation}
where the sequence of $\L$'s approaches infinity in the sense of 
 ``strong van Hove" \cite[p.~107]{M-L} and ``approximable by cubes" \cite[p.~91]{M-L}. The sequence $\{\L_l\}$ as in Notation \ref{notation2} is an example of such $\L$'s.
See \cite[p.~111]{M-L} for the existence and properties of this 
limit\footnote{In \cite{M-L}'s notation, our $\Xi(\mu, \vec \l, \beta)$ is $s(\R \times \R^3 \times \R_{>0}; \mu,\vec \l, \beta)$. 
\cite{M-L} also comments on differentiability of $\Xi$ for finite range interactions but leaves open the differentiability for stable potentials. We therefore resort to our own differentiability arguments here.}. 
Trivially, by \eqref{PhitoP}, $\Xi(\aaa) = \infty$ for $(\aaa)\notin \mathfrak I$.
{
\begin{Lem}\label{domP} 
For $(\aaa)\in \mathfrak{I}$, we have $0\leq \Xi(\aaa) < \infty$.
(Therefore $\Xi$ is a proper convex function with domain ${\rm dom} (\Xi) = \mathfrak I$.) 
\end{Lem}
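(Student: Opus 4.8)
The plan is to establish both inequalities at the level of the finite-volume log-partition functions $\Phi_\L(\aaa)$, with bounds that are \emph{uniform in} $\L$, and then pass to the limit \eqref{PhitoP}. Since $\Xi(\aaa)=\lim_{\L\to\infty}\Phi_\L(\aaa)$, any estimate $0\le \Phi_\L(\aaa)\le D$ valid for all $\L$ in the defining sequence, with $D$ independent of $\L$, yields $0\le\Xi(\aaa)\le D<\infty$ at once.

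For the lower bound I would simply retain the $n=0$ term in the series defining $Z_{(\aaa),\L}$. All terms of the integral are nonnegative, and the empty-configuration term contributes $\o_0(\{\emptyset\})\exp\big((\aaa)\cdot(0,\vec 0,0)\big)=1$; hence $Z_{(\aaa),\L}\ge 1$, so $\Phi_\L(\aaa)=\tfrac1{\V}\log Z_{(\aaa),\L}\ge 0$ for every $\L$, and therefore $\Xi(\aaa)\ge 0$.

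The upper bound is the substantive step. Writing $Z_{(\aaa),\L}$ as the sum over $n$ of the integrals defining $\o_n$, I would use stability \eqref{stability} to bound $\exp(\beta U)\le \exp(|\beta|\,nL)$ (recall $\beta<0$), so the potential energy drops out uniformly in the positions. The position integral then contributes a factor $|\L|^n$, while the momentum integral factorizes over particles into $n$ copies of the single Gaussian integral $\int_{\R^3}\exp\big(\vec\l\cdot p+\tfrac{\beta}{2}|p|^2\big)\,dp$, which is finite precisely because $\beta<0$ and evaluates, by completing the square, to a constant $C=C(\vec\l,\beta)$. Collecting terms gives $Z_{(\aaa),\L}\le \sum_{n\ge 0}\tfrac1{n!}\big(|\L|\,e^{\mu+|\beta|L}C\big)^n=\exp\big(|\L|\,D\big)$ with $D=e^{\mu+|\beta|L}C$ depending only on $(\aaa)$. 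Dividing by $\V$ yields $\Phi_\L(\aaa)\le D$ for all $\L$, so $\Xi(\aaa)\le D<\infty$.

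The calculation is routine, so there is no genuine obstacle; the one point requiring care, and the reason the argument works, is the independence of $D$ from $\L$: the stability constant $L$ does not grow with the domain and the Gaussian normalization is a per-particle constant, so the exponential resummation produces exactly the extensive factor $|\L|$ in the exponent that cancels the $\tfrac1{\V}$ prefactor. The properness assertion and $\mathrm{dom}(\Xi)=\mathfrak I$ then follow by combining this with the already-noted fact that $\Xi=+\infty$ off $\mathfrak I$.
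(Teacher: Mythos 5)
Your proof is correct and follows essentially the same route as the paper's: the lower bound from $Z_{(\aaa),\L}\ge 1$ (the empty-configuration term) and the upper bound from stability plus the Gaussian momentum integral (completing the square), which the paper states tersely and you carry out explicitly with the resummation $\sum_n \frac{1}{n!}\bigl(|\L|\,e^{\mu+|\beta|L}C\bigr)^n = \exp(|\L|D)$ showing the bound is uniform in $\L$.
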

\proof
Fix any $(\aaa)$ in $\mathfrak{I}$.
By the stability \eqref{stability} and completing squares for the velocity terms, we easily see that $\Xi(\aaa) < \infty$.
On the other hand, as $Z_{(\aaa),\L}\geq1$  for all $\L$, we have  $\Phi_\L(\aaa) \geq 0$, therefore $\Xi(\aaa) \geq 0$.
\qed
}

The thermodynamic limit microcanonical entropy $s$ and the pressure $\Xi$ are related via convex conjugation:  
\begin{equation} \label{pressure} 
\begin{split}
      \Xi(\aaa) 
     =
      \sup_{(\m)\in \R^5} \{   s(\m) + (\m)\cdot(\mu, \vec \l, \beta)\},
\end{split}
\end{equation}
(see \cite[p.~45, Lemma 5]{M-L}\,\footnote{To go from Martin-L\"of's Lemma 5 to \eqref{pressure} notice that the limit on the left of \eqref{pressure} is the supremum of what Martin-L\"of calls $s(u, a)$ in Lemma 5, p.~45, and that the {\em pointwise} formula (in Martin-L\"of's notation) $s(u,a) = s(u) - a\cdot u$ holds by \cite[Lemma 5a]{M-L}.})
i.e.~
\begin{equation} \label{Xi-s}
\begin{split}
      \Xi(\aaa)
      =
      (-s)^*(\mu, \vec \l, \beta), 
\end{split}
\end{equation}
with $^*$ still denoting the convex conjugate. {As $s$ is concave and upper-semi continuous, we can conjugate once again to get
\begin{equation} \label {-s as a conjugate}
\begin{split}
      \Xi^*(\rho, \vec u, E) = -s(\rho, \vec u, E).
\end{split}
\end{equation}
}

As mentioned in the introduction to this section, it will be important to know when $\Xi$ is strictly convex. 
We assume strict convexity for the moment and provide a discussion in section \ref{section: strict convexity}.

\begin{Lem} \label{s essentially differentiable}
Assume that $\Xi(\aaa)$ is strictly convex on $\mathfrak{I}$. Then $s$ is essentially differentiable, i.e.~differentiable in the interior of its domain and $|\nabla s| \to \infty$ on the boundary of its domain. 
\end{Lem}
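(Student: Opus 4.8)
The plan is to read this as a direct instance of the standard duality between essential strict convexity and essential smoothness for a conjugate pair of convex functions, namely \cite[Theorem 26.3]{Rock}. First I would set $g = -s$ and record that $g$ is a closed proper convex function: the concavity of $s$ makes $g$ convex, the upper-semicontinuity of $s$ makes $g$ lower-semicontinuous (hence closed), the bound $s(\m)<\infty$ makes $g>-\infty$, and the nonemptiness of $\mathrm{dom}(s)$ established in Theorem \ref{interiors} ensures $g$ is not identically $+\infty$. By \eqref{Xi-s} and \eqref{-s as a conjugate}, $g$ and $\Xi$ form a conjugate pair, $\Xi = g^*$ and $g = \Xi^*$. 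Being a conjugate, $\Xi$ is automatically closed, and it is proper with $\mathrm{dom}(\Xi)=\mathfrak I$ by Lemma \ref{domP}, so both functions meet the hypotheses of the cited theorem.

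Next I would upgrade the assumption to the exact form needed. By hypothesis $\Xi$ is strictly convex on all of $\mathfrak I = \mathrm{dom}(\Xi)$. Since $\mathrm{dom}(\partial \Xi)\subseteq \mathrm{dom}(\Xi)=\mathfrak I$, the function $\Xi$ is \emph{a fortiori} strictly convex on every convex subset of $\mathrm{dom}(\partial\Xi)$, which is precisely the definition of $\Xi$ being essentially strictly convex in the sense of \cite{Rock}. This step is immediate and involves no real content.

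With $\Xi$ closed, proper, and essentially strictly convex, \cite[Theorem 26.3]{Rock} then yields that its conjugate $\Xi^* = g = -s$ is essentially smooth (equivalently essentially differentiable, cf.\ \cite[Theorem 26.1]{Rock}). Unwinding the definition of essential smoothness, this says exactly that $C:=\mathrm{int}\,\mathrm{dom}(-s)$ is nonempty, that $-s$ is differentiable throughout $C$, and that $|\nabla(-s)(x_i)|\to\infty$ whenever $x_i\to x$ for a boundary point $x$ of $C$. Because $\mathrm{dom}(-s)=\mathrm{dom}(s)$ and $\nabla(-s)=-\nabla s$, this is precisely the claim that $s$ is differentiable in the interior of its domain and that $|\nabla s|\to\infty$ on the boundary, i.e.\ that $s$ is essentially differentiable.

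The only points requiring care are the verification of the closedness/properness hypotheses of \cite[Theorem 26.3]{Rock} and the identification $\Xi^*=-s$; the former is handled by the regularity of $s$ recalled at the start of this section together with Theorem \ref{interiors} and Lemma \ref{domP}, and the latter is exactly \eqref{-s as a conjugate}, which holds because $-s$ is closed convex and hence equals its own biconjugate. Thus I do not expect a substantive obstacle: the content lies entirely in correctly matching the abstract convex-analytic statement to the conjugate pair $(\Xi, -s)$, after which the conclusion is automatic.
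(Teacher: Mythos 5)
Your proof is correct and takes essentially the same route as the paper: both arguments deduce essential strict convexity of $\Xi$ from its strict convexity on $\mathfrak I = \mathrm{dom}(\Xi)$, invoke the conjugacy $\Xi^* = -s$ from \eqref{-s as a conjugate}, and conclude via \cite[Theorem 26.3]{Rock}. Your write-up merely makes explicit the closedness/properness hypotheses that the paper's two-line proof leaves implicit.
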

\proof
As $\text{\rm dom} (\Xi) = \mathfrak{I}$, we have that $\Xi$, as an extended-reals-valued function, is essentially strictly convex (for the general definition see \cite[p.~253]{Rock}).
This observation, \eqref {-s as a conjugate}, and \cite[Theorem 26.3]{Rock} immediately imply the lemma (which should be compared to \cite[Remark 3.7, Remark 6.5]{Gii} where differentiability of $s$ on its domain was shown).
\qed

\subsection{Strict Convexity of Pressure} \label {section: strict convexity}
Strict convexity of pressure for lattice systems was shown by Griffiths and Ruelle in \cite{GR}.
For continuous systems, this is the case in general when the Gibbs variational principle (see \cite{D}, \cite{Gii} for a definition) holds.
As here is the only place we use the concept of Gibbs states, we only sketch the argument. For details see \cite[Proposition 8.5]{Preston} and \cite[Remark 3 and \S 5]{Gi}.

Let $\mathfrak X$ be the the space of locally finite configurations on $\R^3$ marked with velocities. For a state $\nu$ (a probability measure on $\mathfrak X$), one can define specific entropy $\mathfrak{s}(\nu)$, density $\mathfrak{r}(\nu)$ and  average energy $\mathfrak{u}(\nu)$ (when a suitable interaction potential is given). 
In general, for any $\mu \in \R$ and $\beta<0$, it holds that 
\begin{equation} \label{variationalprinciple}
\begin{split}
     -\Xi(\mu, \vec 0, \beta)
     \leq
   \mathfrak{s}(\nu)   - \beta \mathfrak{u}(\nu) - \mu \mathfrak{r} (\nu),
\end{split}
\end{equation}
see \cite[\S 3]{Gii} for precise definitions and proof.
The Gibbs variational principle states that the equality in \eqref{variationalprinciple} holds if and only if $\nu$ is a Gibbs state indexed by $\mu$ and $\beta$.
If we assume that $\Xi(\mu, \vec 0, \beta)$ is not strictly convex on $\R \times \R_{<0}$ then there exists distinct 
$(\mu_1,\beta_1)$ and $(\mu_2,\beta_2)$ such that
\begin{equation} \label {eqn: strict convexity not hold}
   \Xi(\mu_t, \vec 0, \beta_t) 
    = 
   t\, \Xi(\mu_1, \vec 0, \beta_1) +(1- t)\, \Xi(\mu_2, \vec 0, \beta_2),
\end{equation}
where $(\mu_t, \beta_t) = t\,(\mu_1,\beta_1) + (1-t)\, (\mu_2,\beta_2)$ and $0<t<1$. Let $\nu$ be a Gibbs state for $(\mu_t,\beta_t)$. 
Then applying Gibbs variational principle to $\nu$,
\begin{equation}
\begin{split}
t \left [ \mathfrak{s}(\nu)   - \beta _1\mathfrak{u}(\nu) - \mu_1 \mathfrak{r} (\nu)\right] +& (1-t) \left[ \mathfrak{s}(\nu)   - \beta _2\mathfrak{u}(\nu) - \mu_2 \mathfrak{r} (\nu)\right] \\
&\quad\quad = - t \Xi(\mu_1, \vec 0, \beta_1) - (1- t) \Xi(\mu_2, \vec 0, \beta_2).
\end{split}
\end{equation} 
Again by the Gibbs variational principle, we have now that $\nu$ is also a Gibbs state of both $(\mu_1,\beta_1)$ and $(\mu_2,\beta_2)$.
However, it is standard that the Gibbs states are identifiable, in the sense that the sets of Gibbs states for different $(\mu, \beta)$ are disjoint, cf.~\cite[Remark 3.7]{Gii}.
Therefore we conclude strict convexity of $\Xi(\mu, \vec 0, \beta)$ under the assumption of Gibbs variational principle. 

The Gibbs principle holds at least for the following cases:
\begin{itemize}
\item
For superstable, finite range pairwise potentials as in \cite[Theorem 1]{D}. 
\item
For regular, non-integrably divergent pairwise potentials as in \cite[p.~1344]{Gii}. 
\end{itemize} 
Note that Lennard-Jones type interactions satisfy the assumptions from \cite{Gii} and are tempered.

The above argument was written for $\Xi(\mu, \vec 0, \beta)$ which is the case in \cite{Gii}. It is easy to see that the dependence on $\vec \l$  does not spoil strict convexity:                                                  

\begin{Prop} \label{imcludeP} If $(\mu, \beta) \mapsto \Xi(\mu, \vec 0, \beta)$ is strictly convex on $\R\times \R_{<0}$ (as in the above cases) then  
$\Xi(\aaa)$ is strictly convex on $\mathfrak{I}$.
\end{Prop}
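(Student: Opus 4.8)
The plan is to eliminate the velocity parameter $\vec\l$ entirely by completing the square in the momenta, reducing the statement to the hypothesized strict convexity in $(\mu,\beta)$. First I would record the exact finite-volume identity
\[
   Z_{(\mu,\vec\l,\beta),\L} = Z_{(\mu - |\vec\l|^2/(2\beta),\,\vec 0,\,\beta),\L},
   \qquad (\mu,\vec\l,\beta)\in\mathfrak I .
\]
This comes from isolating the $N=n$ term of $Z$ and performing the $n$ Gaussian momentum integrals $\int_{\R^3}\exp(\vec\l\cdot p + \frac{\beta}{2}|p|^2)\,dp = \exp(-|\vec\l|^2/(2\beta))\,(-2\pi/\beta)^{3/2}$ (recall $\beta<0$); the resulting factor $\exp(-n|\vec\l|^2/(2\beta))$ is exactly what one gets by shifting $\mu$ to $\mu - |\vec\l|^2/(2\beta)$, while the position integral $\int_{\L^n}e^{\beta U}\,dq$ is left untouched. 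Dividing by $|\L|$, taking logarithms, and passing to the limit along the defining sequence of boxes (the identity holds for each $\L$, so the limits agree) gives
\[
   \Xi(\mu,\vec\l,\beta) = \Xi\!\left(G(\mu,\vec\l,\beta),\,\vec 0,\,\beta\right),
   \qquad G(\mu,\vec\l,\beta):= \mu - \frac{|\vec\l|^2}{2\beta}=\mu+\frac{|\vec\l|^2}{2|\beta|}.
\]

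Next I would set $\Psi(\mu,\beta):=\Xi(\mu,\vec 0,\beta)$ and collect two facts. First, $G$ is convex on $\mathfrak I$: $\mu$ is affine and $(\vec\l,\beta)\mapsto |\vec\l|^2/(-2\beta)$ is the convex quadratic-over-linear function; moreover, for fixed $\beta$, $G$ is \emph{strictly} convex in $\vec\l$ (its $\vec\l$-Hessian is $|\beta|^{-1}I$) and affine in $\mu$. Second, $\Psi$ is strictly convex by hypothesis, and it is nondecreasing in $\mu$ because $Z_{(\mu,\vec 0,\beta),\L}$ is increasing in $\mu$ for every $\L$; since a nondecreasing strictly convex function of one real variable is strictly increasing (equal values at two points would force it to be constant, hence affine, in between), $\Psi$ is in fact strictly increasing in $\mu$.

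Finally I would establish strict convexity directly along a segment. For distinct $x_0=(\mu_0,\vec\l_0,\beta_0)$, $x_1=(\mu_1,\vec\l_1,\beta_1)$ in $\mathfrak I$ and $t\in(0,1)$, write $G_i=G(x_i)$, $x_t=(1-t)x_0+tx_1$, $\bar G_t=(1-t)G_0+tG_1$, $\beta_t=(1-t)\beta_0+t\beta_1$. Convexity of $G$, monotonicity of $\Psi$, and joint convexity of $\Psi$ give
\[
   \Xi(x_t)=\Psi(G(x_t),\beta_t)\le \Psi(\bar G_t,\beta_t)\le (1-t)\Psi(G_0,\beta_0)+t\Psi(G_1,\beta_1).
\]
Suppose the outer terms are equal; then both inner inequalities are equalities. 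Equality in the second forces $(G_0,\beta_0)=(G_1,\beta_1)$ by strict convexity of $\Psi$, hence $\beta_0=\beta_1=:\beta$ and $G_0=G_1$. If also $\vec\l_0=\vec\l_1$, then $G_0=G_1$ yields $\mu_0=\mu_1$, so $x_0=x_1$, a contradiction; if instead $\vec\l_0\neq\vec\l_1$, then $G$ is strictly convex along the segment, so $G(x_t)<\bar G_t=G_0$, and strict monotonicity of $\Psi$ gives $\Psi(G(x_t),\beta)<\Psi(G_0,\beta)$, contradicting equality in the first inequality. Either way equality is impossible, so $\Xi$ is strictly convex on $\mathfrak I$.

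The only delicate point is preserving strictness through the composition: the outer $\Psi$ is strictly convex only jointly in $(\mu,\beta)$, not coordinatewise, so the argument must play the joint strict convexity of $\Psi$ (which excludes $\beta_0\neq\beta_1$ or $G_0\neq G_1$) against the strict convexity of the inner map $G$ in the $\vec\l$ directions (which excludes the remaining case). Verifying that $\Psi$ is genuinely strictly increasing in $\mu$, and not merely nondecreasing, is what makes the last case close, and it is precisely there that the hypothesized strict convexity in $(\mu,\beta)$ is invoked a second time.
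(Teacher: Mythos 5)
Your proposal is correct and follows essentially the same route as the paper's proof: complete the square to reduce to $\Xi(\mu+\tau,\vec 0,\beta)$ with $\tau=-|\vec\l|^2/(2\beta)$, then combine convexity of the shift, monotonicity of $\Xi$ in $\mu$, and the hypothesized strict convexity in $(\mu,\beta)$. The only difference is that you spell out the details the paper leaves implicit (upgrading ``increasing'' to strictly increasing, and the case analysis in the composition argument), which is a faithful completion rather than a different approach.
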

\proof
For any $(\aaa) \in \mathfrak {I} $, by completing squares, we obtain $ \Phi_\L(\aaa) =  \Phi_\L(\mu + \tau, \vec 0, \beta) $ where
$ \tau = \tau(\vec \l, \beta) = - \dfrac {|\vec \lambda|^2}{2 \beta}$.
Therefore  $ \Xi (\aaa) =  \Xi (\mu + \tau, \vec 0, \beta) $.
Elementary calculus shows that $\tau$ is convex on $\R^3\times \R_{<0}$.
Then the strict convexity of $\Xi (\aaa)$ follows from 
the fact that $\Xi(\mu,\vec 0, \beta)$ is increasing in $\mu$ and 
the strict convexity of $(\mu, \beta) \mapsto \Xi(\mu, \vec 0, \beta)$.
\qed

\section{Convergence} \label{Main}
We are now ready for the main results. The main point here is that, thanks to the properties of the functions at hand, as the log-partition functions for finite volumes converge to the thermodynamic pressure their convex conjugates (finite volume information entropies) converge to the convex conjugate of the limit (microcanonical entropy), and so do their derivatives (the thermodynamic parameters). The section also includes comparisons with related results.

\subsection{Convergence of Epigraphs}
Recall again that the epigraph of a function $f$ is the set $\{ (x, a): f(x)\leq a \}$, that a convex function is called proper if it is not identically $+\infty$ and it never has the value $-\infty$ (with obvious modifications for concave functions), and that proper and closed is synonymous to proper and lower semi-continuous in our setting. 

For functions in general, convergence of epigraphs (for convergence of sets as in \cite[\S 4B]{RW}) is used to preserve critical points at the limit. For convex functions, pointwise convergence is not too far from convergence of epigraphs, in a sense that is made precise as we recall the following facts :

\begin{Fact}\label{pointepi}
For $f_{n}$, $f$ convex functions on $\R^{d}$ with $f_{n} \rightarrow f$, assume that $f$ is closed and the interior of the domain of $f$ is not empty. Then the epigraphs of $f_{n}$ converge to  the epigraph of $f$ as sets. 
\end{Fact}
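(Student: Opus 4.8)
The plan is to recast the stated set-convergence of epigraphs as \emph{epi-convergence} of the functions and then verify the two inequalities that characterize it. Recall (for set convergence see \cite[\S4B]{RW}) that $\operatorname{epi} f_n \to \operatorname{epi} f$ as sets means precisely: (i) for every $x$ and every sequence $x_n \to x$ one has $\liminf_n f_n(x_n) \ge f(x)$, and (ii) for every $x$ there is a \emph{recovery sequence} $x_n \to x$ with $\limsup_n f_n(x_n) \le f(x)$. So it suffices to establish (i) and (ii), and I would do this by splitting according to whether $x$ lies in $C := \operatorname{int}\operatorname{dom}(f)$, on $\partial\operatorname{dom}(f)$, or outside $\overline{\operatorname{dom}(f)}$.

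The engine for the interior is the classical fact that a pointwise-convergent sequence of finite convex functions on an open convex set converges uniformly on compact subsets \cite[Theorem 10.8]{Rock}. Applying this on $C$ (nonempty by hypothesis, and where $f$ is finite and continuous), the $f_n$ are eventually finite on $C$ and converge to $f$ locally uniformly there. Consequently, for $x \in C$ and any $x_n \to x$ we get $f_n(x_n) \to f(x)$, which yields (i) at interior points and also (ii) with the constant choice $x_n \equiv x$.

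For (ii) at a point $x \in \operatorname{dom}(f) \setminus C$, I would approach $x$ along an interior segment: fixing $z_0 \in C$ and setting $x_t = (1-t)x + t z_0 \in C$ for $t \in (0,1]$, the closedness of $f$ gives $f(x_t) \to f(x)$ as $t \downarrow 0$ (this is exactly the boundary-continuity along segments of \cite[Theorem 7.5]{Rock}, already used in the proof of Proposition \ref{PhiStar}). Since $f_n(x_t) \to f(x_t)$ for each fixed $t$, a standard diagonal extraction produces $t(n) \to 0$ with $f_n(x_{t(n)}) \to f(x)$, i.e.\ a recovery sequence. For $x \notin \operatorname{dom}(f)$ condition (ii) is vacuous, since $f(x) = +\infty$.

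The main obstacle is (i) at points where $f(x) = +\infty$, where one must show $f_n(x_n) \to +\infty$ for every $x_n \to x$; this is the step that genuinely uses the closedness of $f$. I would argue by convexity: bounding $f_n$ from above on the segments (or small simplices) joining a controlled interior point $z_0$, where $f_n(z_0) \to f(z_0)$ stays bounded, to $x_n$ would force $f_n$ to remain bounded at points where it is obliged to diverge, namely interior-of-segment points lying outside $\operatorname{dom}(f)$ when $x \notin \overline{\operatorname{dom}(f)}$, or, when $x \in \partial\operatorname{dom}(f)$, at interior points $x_t$ for which lower semicontinuity of $f$ forces $f(x_t) \to +\infty$. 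Either way this contradicts the pointwise convergence $f_n \to f$, establishing (i). I note that the entire statement is the content of \cite[Theorem 7.17]{RW}, which could alternatively be invoked directly.
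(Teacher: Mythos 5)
Your proposal is correct in outline, and your closing remark is in fact all the paper does: Fact \ref{pointepi} is not proved in the paper but simply cited as part of \cite[Theorem 7.17]{RW}. Your reconstruction via the two defining inequalities of epi-convergence is therefore a genuinely different, self-contained route. What it buys is transparency: locally uniform convergence of finite convex functions \cite[Theorem 10.8]{Rock} handles everything over $C={\rm int\,dom}(f)$, continuity along segments \cite[Theorem 7.5]{Rock} (the same result the paper invokes in Proposition \ref{PhiStar}) produces recovery sequences at boundary points, and one sees exactly where convexity, closedness, and ${\rm int\,dom}(f)\neq\emptyset$ each enter; the price is length and a few technicalities the citation hides. (One such technicality: to apply \cite[Theorem 10.8]{Rock} you should first note that the $f_n$ are eventually finite on each compact subset of $C$, which follows by evaluating at the vertices of covering simplices and using convexity.)

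There is, however, one genuine hole in your case analysis for the lower inequality (i): you establish it at points of $C$ and at points where $f(x)=+\infty$, but not at points $x\in\partial({\rm dom}(f))$ with $f(x)$ finite. That case is neither vacuous nor automatic, and it is precisely another place where closedness is indispensable: take $d=1$, $f=0$ on $(0,1)$, $f(0)=f(1)=1$, $f=+\infty$ elsewhere (convex, not closed), and $f_n\equiv f$; then (i) fails at $x=0$ along $x_n=1/n$, and the epigraphs of $f_n$ converge to the epigraph of the closure of $f$ rather than of $f$. The repair is the same device you sketch for the $+\infty$ case, and it also shows why your parenthetical ``or small simplices'' is not optional: if $f_n(x_n)\le\alpha<f(x)$ along a subsequence, fix $z_0\in C$ and a ball $B(z_0,r)\subset C$ on whose closure the $f_n$ are eventually bounded by some $M$ (by the locally uniform convergence); for fixed $t\in(0,1]$ the point $x_t=(1-t)x+tz_0$ lies in ${\rm conv}\left(\{x_n\}\cup B(z_0,r)\right)$ for $n$ large, because $(1-t)x_n+tB(z_0,r)$ is a ball of radius $tr$ whose center converges to $x_t$; hence $f_n(x_t)\le(1-t)\alpha+tM$, and letting $n\to\infty$ and then $t\downarrow 0$ contradicts lower semicontinuity of $f$ at $x$. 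The segment $[z_0,x_n]$ alone does not suffice, since its intermediate points move with $n$ while pointwise convergence is only available at fixed points; the same remark applies to your two $+\infty$ subcases, which this computation (run with arbitrary $\alpha$) also completes. For the paper's application the missing case happens to be empty, since there $f=\Xi$ and ${\rm dom}(\Xi)=\mathfrak I$ is open, but Fact \ref{pointepi} is stated for general closed convex $f$, so the case must be covered.
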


The importance of the convergence of epigraphs also lies in that it is inherited by conjugates:

\begin{Fact}\label{convconv}
If $f_{n}$, $f$ are 
proper, closed, convex functions on $\R^{d}$, then the epigraphs of $f_n$ converge to the epigraph of $f$ if and only if the epigraphs of $f_n^*$ converge to the epigraph of $f^*$. 
\end{Fact}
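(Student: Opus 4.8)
The plan is to recognize this as the finite-dimensional Wijsman duality theorem for epi-convergence and to reduce it to a single implication. Since each $f_n$ and $f$ is proper, closed, and convex, the biconjugation identities $f^{**}=f$ and $f_n^{**}=f_n$ hold, so the two directions of the ``if and only if'' are formally symmetric: applying the forward implication to the pair $(f_n^*,f^*)$ (again proper, closed, convex) and using $f_n^{**}=f_n$, $f^{**}=f$ recovers the reverse implication. Thus it suffices to show that set-convergence of epigraphs $\mathrm{epi}(f_n)\to\mathrm{epi}(f)$ forces $\mathrm{epi}(f_n^*)\to\mathrm{epi}(f^*)$. Throughout I would use the standard reformulation (see \cite[\S4B]{RW}) of epigraph set-convergence as \emph{epi-convergence}, namely the conjunction of a \emph{lower} estimate, $\liminf_n f_n(x_n)\ge f(x)$ whenever $x_n\to x$, and a \emph{recovery} estimate, the existence for each $x$ of some $x_n\to x$ with $\limsup_n f_n(x_n)\le f(x)$.

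The lower estimate for the conjugates is the easy half and I would dispatch it directly. Fix $y$ and any $y_n\to y$. For an arbitrary point $x$, choose a recovery sequence $x_n\to x$ for $f$ furnished by the recovery estimate for the $f_n$. The definition of the conjugate gives $f_n^*(y_n)\ge \langle x_n,y_n\rangle - f_n(x_n)$, whence
\[
  \liminf_n f_n^*(y_n)
  \ \ge\
  \langle x,y\rangle - \limsup_n f_n(x_n)
  \ \ge\
  \langle x,y\rangle - f(x).
\]
Taking the supremum over $x$ yields $\liminf_n f_n^*(y_n)\ge f^*(y)$, which is precisely the lower estimate for $\mathrm{epi}(f_n^*)\to\mathrm{epi}(f^*)$.

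The recovery estimate for the conjugates is the crux and the \textbf{main obstacle}. One must produce, for each $y$ with $f^*(y)<\infty$, a sequence $y_n\to y$ such that $\limsup_n f_n^*(y_n)\le f^*(y)$. The difficulty is structural: $f_n^*(y)$ is a supremum of $\langle x,y\rangle-f_n(x)$ over all of $\R^d$, so the lower estimate for the $f_n$ cannot be inserted termwise without interchanging a limit with this supremum. Convexity is what rescues the argument: in finite dimensions epi-convergence of convex functions coincides with Mosco convergence, and Mosco convergence is self-dual under the Legendre--Fenchel transform. In practice one first treats $y\in\mathrm{int\,dom}(f^*)$, where $y$ is a subgradient of $f$ at some $\bar x$; a coercivity/minimax argument then confines the near-maximizers of $x\mapsto\langle x,y_n\rangle-f_n(x)$ to a fixed compact set, on which the lower estimate for the $f_n$ can finally be combined with the supremum to produce the bound, and the boundary of $\mathrm{dom}(f^*)$ is recovered afterwards by closedness. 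I would complete this step by invoking the precise statement of Wijsman's theorem, \cite[Theorem 11.34]{RW}, whose proof carries out exactly this confinement; the hypotheses that $f_n,f$ be proper and closed are what guarantee that the conjugates are again proper and closed, so that both the theorem and the biconjugation reduction apply.
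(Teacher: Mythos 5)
Your proposal is correct and ultimately takes the same route as the paper: the paper proves this Fact simply by citing Wijsman's theorem, \cite[Theorem 11.34]{RW}, which is exactly the result you invoke for the crucial recovery estimate on the conjugates. The extra material you supply --- the biconjugation reduction of the ``if and only if'' to a single implication and the direct proof of the liminf inequality $\liminf_n f_n^*(y_n)\ge f^*(y)$ --- is sound, but since the crux is still delegated to the same citation, your argument is essentially the paper's.
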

Finally, convergence of epigraphs implies pointwise convergence as in the following:
\begin{Fact}\label{epipoint}
Let $f_{n}$ and $f$ be convex functions on $\R^{d}$ with the epigraphs of $f_{n}$ converging to the epigraph of $f$. Assume that $f$ is closed and the interior of the domain of $f$ is not empty. Then $f_{n} \rightarrow f$ uniformly on any compact set in the interior of the domain of $f$. 
\end{Fact}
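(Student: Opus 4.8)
The plan is to translate epigraph convergence into its two standard variational inequalities and then exploit the local Lipschitz behaviour of convex functions. Recall from \cite[\S 4B,\ \S 7]{RW} that the set-convergence of the epigraphs of $f_n$ to that of $f$ is equivalent to the pair of conditions: \emph{(i)} for every $x$ and every sequence $x_n \to x$ one has $\liminf_n f_n(x_n) \ge f(x)$; and \emph{(ii)} for every $x$ there is a sequence $x_n \to x$ with $\limsup_n f_n(x_n) \le f(x)$. Throughout I fix a compact set $C \subset \mathrm{int}\,\mathrm{dom}(f)$ and enlarge it slightly to a compact set $C'$ with $C \subset \mathrm{int}(C') \subset C' \subset \mathrm{int}\,\mathrm{dom}(f)$, which is possible since $\mathrm{int}\,\mathrm{dom}(f)$ is open and contains the compactum $C$.

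First I would establish a uniform upper bound on $C'$. Choose affinely independent points $a_0,\dots,a_d \in \mathrm{int}\,\mathrm{dom}(f)$ whose convex hull is a $d$-simplex containing $C'$ in its interior; each $f(a_i)$ is finite. Applying \emph{(ii)} at each vertex yields sequences $x_n^{(i)} \to a_i$ with $\limsup_n f_n\bigl(x_n^{(i)}\bigr) \le f(a_i)$, so for $n$ large $f_n\bigl(x_n^{(i)}\bigr) \le M := 1 + \max_i f(a_i)$. Since the vertices converge and $C'$ lies in the interior of the limiting simplex, for $n$ large the convex hull of $x_n^{(0)},\dots,x_n^{(d)}$ still contains $C'$, and convexity forces $f_n \le M$ on $C'$. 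For the uniform lower bound I would argue by contradiction: if $\inf_{C'} f_n$ failed to stay bounded below, a subsequence would produce $y_{n_k}\in C'$ with $f_{n_k}(y_{n_k}) \to -\infty$ and, by compactness, $y_{n_k} \to y \in C'$; splicing this into a full sequence converging to $y$ contradicts \emph{(i)}, since $f(y)$ is finite. Hence $\{f_n\}$ is eventually uniformly bounded on $C'$.

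Uniform boundedness of convex functions on the neighbourhood $C'$ of $C$ gives, by the standard estimate bounding the Lipschitz constant of a convex function on a ball by its oscillation over a slightly larger ball, a common Lipschitz constant $\mathcal L$ for all (large $n$) $f_n$ on $C$; the same bound passes to the limit, so $f$ is $\mathcal L$-Lipschitz on $C$ as well. With this equi-Lipschitz continuity in hand, pointwise convergence on $\mathrm{int}\,\mathrm{dom}(f)$ follows at once: \emph{(i)} applied to the constant sequence gives $\liminf_n f_n(x) \ge f(x)$, while a recovery sequence $x_n \to x$ from \emph{(ii)} together with $f_n(x) \le f_n(x_n) + \mathcal L\,|x - x_n|$ gives $\limsup_n f_n(x) \le f(x)$.

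Finally I would upgrade pointwise to uniform convergence on $C$ by the usual equicontinuity argument: cover $C$ by finitely many balls of radius $\epsilon/(3\mathcal L)$ centred at points $z_1,\dots,z_m \in C$, choose $N$ so that $|f_n(z_j) - f(z_j)| < \epsilon/3$ for all $j$ and $n \ge N$, and bound $|f_n(x)-f(x)|$ for arbitrary $x \in C$ by the triangle inequality, using the common Lipschitz constant for $f_n$ and for $f$. I expect the main obstacle to be the equi-boundedness step — in particular producing the uniform upper bound purely from the limsup half of epi-convergence, since the recovery sequences only control $f_n$ at moving points; the simplex-and-convexity device is precisely what converts this into a genuine bound on the fixed neighbourhood $C'$, and everything downstream is routine convex analysis.
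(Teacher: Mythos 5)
Your proof is necessarily a different route from the paper's, because the paper does not prove this Fact at all: it simply cites \cite[Theorem 7.17]{RW}. Your skeleton — the liminf/limsup characterization of epi-convergence, eventual equi-boundedness on a slightly larger compact set, the resulting common Lipschitz constant, pointwise convergence, then uniform convergence by equicontinuity — is exactly the standard proof of that theorem, so the plan is sound. However, one step fails as written: the claim that you can choose affinely independent $a_0,\dots,a_d \in \mathrm{int}\,\mathrm{dom}(f)$ whose convex hull contains $C'$ in its interior. Since $\mathrm{dom}(f)$ is convex, any such simplex lies inside $\mathrm{int}\,\mathrm{dom}(f)$, and a compact subset of an open convex set need not fit inside any simplex contained in that set. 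Concretely, in $\R^2$ let $f$ be $0$ on the closed unit disk and $+\infty$ outside, and let $C'$ be the closed disk of radius $0.99$: any triangle containing $C'$ has area at least $3\sqrt{3}\,(0.99)^2 \approx 5.1$, while any triangle with vertices in the unit disk has area at most $3\sqrt{3}/4 \approx 1.3$. So no such triangle exists. This is not a corner case for the paper, which applies the Fact to functions on $\R^5$.

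The repair is local and preserves your idea: for each $x \in C'$ pick a small simplex centered at $x$ and contained in $\mathrm{int}\,\mathrm{dom}(f)$ whose interior contains a ball $B(x,r_x)$; the recovery sequences at its $d+1$ vertices show that for all large $n$ the perturbed vertices still span a set containing $B(x,r_x)$, whence $f_n \le M_x$ on $B(x,r_x)$ by convexity; finitely many such balls cover $C'$, giving the eventual uniform upper bound on $C'$. Two smaller points. First, your remark that the Lipschitz bound ``passes to the limit'' is circular at that stage, since pointwise convergence is only established afterwards; it is also unnecessary, because $f$ is finite and convex on the open set $\mathrm{int}\,\mathrm{dom}(f)$ and hence locally Lipschitz there by standard convex analysis. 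Second, the lower-bound contradiction needs $f(y) > -\infty$, i.e.\ $f$ proper; this should be stated as a hypothesis (it is harmless in the paper, where the functions involved, such as $\Phi_\L^*$ and $-s$, are proper).
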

Facts \ref{pointepi} and \ref{epipoint} are included in \cite[Theorem 7.17]{RW}, while Fact \ref{convconv} is Theorem 11.34 in the same reference. Recalling the definition of $\mathcal{E}'_{\L_l}$ from Notation \ref{notation 3}, we can now state:
\begin{Thm} \label{interiors}
For interaction $U$ stable and tempered: 
\begin{enumerate}
\item \label{interior not empty}
The interior of the domain of $s$ is not empty.
\item\label{interior is interior}
Fix any sequence of $\L$'s that approaches infinity in the strong van Hove sense and is approximable by cubes. Whenever  $(\rho, \vec u, E)$ is in the interior of the domain of $s$
then $(\rho,\vec u, E)$  is in $\mathfrak{S}_{\L}$, for all ${\L}$ large enough. 
Therefore $\displaystyle    {\rm int \,dom}(s) \subset \liminf_\L \mathfrak{S}_{\L}$.
\item\label{union interiors} 
 $\displaystyle    {\rm int \,dom}(s) = \bigcup_l {\rm int \, conv}(\mathcal{E}'_{\L_l})$.
\end{enumerate}
\end{Thm}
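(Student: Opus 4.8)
The plan is to route everything through two convex-analysis facts about a convex set $C\subset\R^5$: that $C$ has the same dimension as $\overline{C}$, and that if $C$ is full-dimensional then $\mathrm{int}\,C=\mathrm{int}\,\overline{C}$. Since $s$ is concave and upper-semicontinuous, $\mathrm{dom}(s)$ is convex, so once it is known to be full-dimensional all the relevant interiors coincide, and I may combine this with the relation $\overline{\mathrm{dom}(s)}=\overline{\bigcup_l \mathcal E'_{\L_l}}$ from \eqref{closure union}. First I would dispatch \eqref{interior not empty} and the easy inclusion in \eqref{union interiors}. For \eqref{interior not empty}: fix $l$ so large that $\L'_l$ admits configurations of $n=1$ and $n=2$ particles; by Lemma \ref{boundary is essential range} each slice $\mathcal C_{\L'_l,n}$ with $n\ge 1$ is a $4$-dimensional solid paraboloid region (for $n=1$ the constant $L^{(1)}=0$ since $U(q_1)=0$), and the $n=1,2$ slices lie in distinct $\rho$-hyperplanes, so their convex hull — hence $\mathrm{conv}(\mathcal E'_{\L_l})=\tfrac{|\L'_l|}{|\L_l|}\mathcal C_{\L'_l}$ by \eqref{conv prime vs without} — is $5$-dimensional. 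As $\mathcal E'_{\L_l}\subset\overline{\mathrm{dom}(s)}$ and the latter is convex, $\mathrm{conv}(\mathcal E'_{\L_l})\subset\overline{\mathrm{dom}(s)}$, so $\overline{\mathrm{dom}(s)}$ is full-dimensional and therefore so is $\mathrm{dom}(s)$, giving \eqref{interior not empty}. The same containment yields $\mathrm{int\,conv}(\mathcal E'_{\L_l})\subset\mathrm{int}\,\overline{\mathrm{dom}(s)}=\mathrm{int\,dom}(s)$ for every $l$, which is the inclusion $\supset$ in \eqref{union interiors}.

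The substance is the reverse inclusion in \eqref{union interiors} together with all of \eqref{interior is interior}, and for both I would use a single packing construction. Given $(\m)\in\mathrm{int\,dom}(s)=\mathrm{int}\,\overline{\bigcup_l\mathcal E'_{\L_l}}$, choose a small cross-polytope with vertices $(\m)\pm r\vec e_i$ ($i=1,\dots,5$) inside $\overline{\bigcup_l\mathcal E'_{\L_l}}$, and approximate each vertex to within $r/100$ by a point $y_i^\pm\in\mathcal E'_{\L_{l_i}}$, so that $(\m)$ lies in the interior of $\mathrm{conv}\{y_i^\pm\}$. Each $y_i^\pm$ is the value $(\M)/|\L_{l_i}|$ of a configuration supported in the inner box $\L'_{l_i}$. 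For $l'$ larger than all $l_i$ I would tile $\L'_{l'}$ by disjoint translates of a common box, placing in the inner region of a prescribed fraction $c_i^\pm$ of the tiles a configuration realizing $y_i^\pm$; because densities add over the far-separated tiles and the inter-tile energy is controlled by temperedness \eqref{condition: temperedness} across the Martin-L\"of corridors of width $\ge 2R_{l'}>R$, the $|\L_{l'}|$-normalized value of $(\M)$ is, up to an error tending to $0$ as $l'\to\infty$ (van Hove), the volume-weighted average $\sum_{i,\pm}c_i^\pm y_i^\pm$, a convex combination of the $y_i^\pm$. Varying the fractions shows that $\mathrm{conv}(\mathcal E'_{\L_{l'}})$ contains, up to this vanishing error, all of $\mathrm{conv}\{y_i^\pm\}$; since $(\m)$ is interior to the latter, $(\m)\in\mathrm{int\,conv}(\mathcal E'_{\L_{l'}})$ for $l'$ large, which is the reverse inclusion in \eqref{union interiors}. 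The residual energy error only perturbs the $E$-coordinate and is harmlessly absorbed using that the sets $\mathcal C_{\L,n}$, and hence all the hulls, are closed upward in $E$ (noted before Theorem \ref{bijection}).

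For \eqref{interior is interior} I would run the identical tiling in a general $\L$ that is approximable by cubes and van Hove: by \eqref{union interiors} the fixed $(\m)\in\mathrm{int\,dom}(s)$ already lies in $\mathrm{int\,conv}(\mathcal E'_{\L_{l_0}})$ for some $l_0$, and tiling $\L$ by $\L_{l_0}$-boxes realizes, up to errors vanishing as $\L\to\infty$, a full neighborhood of $(\m)$ as $|\L|$-normalized averages of points of $\mathcal E_\L$. Hence that neighborhood sits in $\mathcal C_\L$, i.e.\ $(\m)\in\mathring{\mathcal C}_\L=\mathfrak S_\L$ by Theorem \ref{bijection}, for all $\L$ large; this is exactly $(\m)\in\liminf_\L\mathfrak S_\L$. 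I expect the main obstacle to be the bookkeeping of this packing: making the inter-tile interaction energy genuinely negligible per unit volume for infinite-range (merely tempered) potentials — where the growing corridors $R_l=R_0 2^{\rho l}$ are essential — and verifying that the geometric errors (discretization of the mixing fractions, the inner/outer volume ratio $|\L'_{l'}|/|\L_{l'}|\to 1$, and the energy shift) are all $o(1)$, so that the open/interior statements, and not merely their closures, are preserved. Everything else reduces to the cited convex-analysis facts and the explicit paraboloid description in Lemma \ref{boundary is essential range}.
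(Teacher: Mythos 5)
Your part (1) and the inclusion $\supset$ in (3) are sound and essentially match the paper: both rest on \eqref{closure union}, convexity of ${\rm dom}(s)$, and ${\rm int\,dom}(s)={\rm int}\,\overline{{\rm dom}(s)}$. For (2) and the reverse inclusion in (3), however, you take a genuinely different, constructive route, and it does not close as written. For contrast, the paper proceeds softly: $\Phi_\L\to\Xi$ pointwise \eqref{PhitoP}, $\Xi$ is proper and closed with nonempty interior of domain (Lemma \ref{domP}), so Facts \ref{pointepi}--\ref{epipoint} give epi-convergence of $\Phi_\L^*$ to $\Xi^*=-s$ \eqref{-s as a conjugate}, hence uniform convergence on compacta in ${\rm int\,dom}(s)$; thus any $(\m)\in{\rm int\,dom}(s)$ lies eventually in ${\rm int\,dom}(\Phi_\L^*)=\mathfrak S_\L$ (Proposition \ref{PhiStar} and Theorem \ref{bijection}), and (3) then follows from (2) applied to the sequence $\{\L'_l\}$ together with \eqref{conv prime vs without} and $|\L'_l|/|\L_l|\to 1$. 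The hard packing estimates you attempt to redo are exactly what is already packaged inside the citations behind \eqref{closure union} and \eqref{pressure}.

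The gap is in your error accounting, and it is not mere bookkeeping. When you tile $\L'_{l'}$ by translates of $\L_{l_i}$ and place configurations in the inner boxes, configurations in adjacent tiles are separated by corridors of width $2R_{l_i}$, not $2R_{l'}$ as you wrote: the big-box corridor is irrelevant to inter-tile separation. Temperedness then bounds the inter-tile energy \emph{per unit volume} by a quantity of order $\rho^2\,|\L_{l_i}|\,R_{l_i}^{-\delta}$ plus farther-neighbor terms, i.e.\ roughly $2^{(3-\rho\delta)l_i}$. This is a constant depending only on the tile scale $l_i$; it does \emph{not} tend to $0$ as $l'\to\infty$ (only the geometric/volume-ratio errors do). It is small only when $l_i$ is large, which is precisely why Martin-L\"of lets $R_l=R_0 2^{\rho l}$ grow. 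But your approximants $y_i^\pm$ are produced from \eqref{closure union}, which yields them at some \emph{uncontrolled} scale $l_i$, possibly small; then the fixed energy error at that scale need not be small relative to your cross-polytope radius $r$, and you cannot pin $(\m)$ inside ${\rm int\,conv}(\mathcal E'_{\L_{l'}})$. (Your "closed upward in $E$" remark does not rescue this: knowing that the realized point, shifted up in $E$, lies in the hull does not place the unshifted target in the hull; and the temperedness error is two-sided anyway.) The same circularity infects your part (2), where "some $l_0$" coming from (3) is again of uncontrolled size. The missing idea that would repair the construction: choose the vertices $x_i^\pm$ in ${\rm dom}(s)$ and use the \emph{definition} of $s$ --- $s(x_i^\pm)>-\infty$ forces the defining $\o$-measures to be strictly positive for all large $l$ --- to obtain approximants in $\mathcal E'_{\L_l}$ within any prescribed $\eta$ for \emph{all} sufficiently large $l$; then the tile scale can be taken large enough to make the temperedness error small before sending $l'\to\infty$ (or $\L\to\infty$). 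Without that step, or an equivalent propagation lemma with summable dyadic errors, the proof as proposed fails at its central estimate.
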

\proof
For \ref{interior not empty}, by \eqref{closure union} and the fact that ${\rm dom} (s)$ is convex we get
\begin{equation}
    \overline{{\rm dom}(s)}
    = 
    \overline{\bigcup_l \mathcal{E}'_{\L_l}}
    = 
    \overline{  \bigcup_l {\rm conv} \left( \mathcal{E}'_{\L_l}\right)  }.
\end{equation}
Again by convexity of ${\rm dom}(s)$, we have ${{\rm int\,dom}(s)} = {\rm int}\ {\overline{{\rm dom}(s)}}$ (cf.~\cite[Theorem 6.3]{Rock}).
It therefore follows that
\begin{equation}
 {{\rm int\,dom}(s)} 
    =
   {\rm int} \left(\overline{\bigcup_l  {{\rm conv} \left( \mathcal{E}'_{\L_l}\right)}} \right)
    \supset  \bigcup_l {\rm int}\ {{\rm conv} \left( \mathcal{E}'_{\L_l}\right)}.
 \end{equation}
As ${\rm int}\ {{\rm conv} \left( \mathcal{E}'_{\L_l}\right)}$ is not empty (cf.\,Notation \ref{notation 3}), the proof of \ref{interior not empty} is complete.

For \ref{interior is interior}: The interior of the domain of $\Xi$ is not empty, by Lemma \ref{domP} and $\Xi$ is closed since it is a conjugate as in \eqref{Xi-s}(cf.~\cite[Thm.~12.2]{Rock}). Therefore, according to  Fact \ref{pointepi}, the pointwise converge \eqref{PhitoP}  implies that the
epigraphs of $\Phi_\L$ converge to the epigraph of $\Xi$.

Therefore, by Fact \ref{convconv}, the epigraphs of $ \Phi_\L^*$ converge as sets to the epigrpaph of $\Xi^*$. Equation \eqref {-s as a conjugate} gives $ \Xi^*(\rho, \vec u, E) = -s(\rho, \vec u, E)$.
In other words, the epigraphs of $\Phi_\L^*$ converge to the epigraph of $-s$. Then apply Fact \ref{epipoint} to obtain that that for $(\rho, \vec u, E)$ in the interior of the domain of $s$  we have that $\Phi_\L^*$ converges uniformly to $-s$ on any ball in the interior of the domain that contains $(\rho, \vec u, E)$. In particular, $(\rho, \vec u, E)$ is in the interior of the domain of $\Phi_\L^*$ for all $\L$ large enough. 

For \ref{union interiors}, observe that we have already shown in part \ref{interior not empty} that
\begin{equation}
\begin{split}
    {\rm int}\ {{{\rm dom}(s)}}
    \supset 
     \bigcup_l {\rm int}\ {{\rm conv} \left( \mathcal{E}'_{\L_l}\right)} .
\end{split}
\end{equation}
Conversely, let $x \in {\rm int \ dom}(s)$. For any sequence of $\L$'s that approaches infinity in the strong van Hove sense and approximable by cubes, part \ref{interior is interior} shows that $x \in \mathfrak{S}_\L = {\rm int\ conv }( \mathcal{E}_\L)$ for all $\L$ large enough.  Now we fix the sequence of $\L$'s as $\{ \L'_l\}$. We therefore have $x \in {\rm int\ conv }( \mathcal{E}_{\L'_l})$ for $l$ large--note that here the prime is on $\L$ and not on $\mathcal{E}$. Indeed, using the convexity of the sets  $ {\rm conv }( \mathcal{E}_{\L'_l})$,  a whole neighborhood of $x$, say $B_x$, is in ${\rm conv }( \mathcal{E}_{\L'_l})$ for $l$ large enough. Using \eqref{conv prime vs without} and $\lim_{l\to \infty}\dfrac{|\L'_l|}{|\L_l|} =1$, we have that $B_x$ is in ${\rm conv }( \mathcal{E}'_{\L_l})$ for $l$ large, therefore $x \in  \bigcup_l {\rm int}\ {{\rm conv} \left( \mathcal{E}'_{\L_l}\right)} $. 
%
{See Figure \ref{Fig3} (cf.~figure in \cite[p.~102]{M-L}).}
\qed

\begin{center}
\begin{figure}
\begin{center}
\begin{tikzpicture}
[domain=0:1.,xscale=1.5,yscale=1.5]
\draw[-] (-1.2,0) -- (0,0);
\draw[domain=0:-2.]
      plot (\x, {-0.5*\x}) node[left] {$-L \rho$};
\draw[-] (0,0)-- (-1.2,1.11);    
\path[fill = gray!20!white]
(-1.8,1.9) to [out = -77, in= 145] (0,0)-- (-1.2,1.11) -- (-1.4,1.91);
\path [fill=gray!30!white] 
(0,0) 
-- (-1.2,1.11) -- (-1.4,1.91)--(0,1.91)--(0,0);
\path [fill=gray!30!white] (0,1.91)--(0,0)--(2,0)--(2,1.9)--(0,1.91);
\draw[-] (-1.2,1.11)--(-1.4,1.91);
\draw[-] (-1.2,1.11)--(0,0);
 \draw[<->] (0,2.5) node[left]{$\rho$}-- (0,0) -- (2.5,0) node[below] {$E$};
 \draw (-1.8,1.9)
 to [out = -77, in= 145] (0,0);
\end{tikzpicture}
\caption{\label{Fig3} \smaller The convex hull of $\mathcal{E}'_{\L}$ (heavier shade) compared to the the domain of $s$ (light plus heavier shade) at $\vec u = \vec 0$.}
\end{center}
\end{figure}
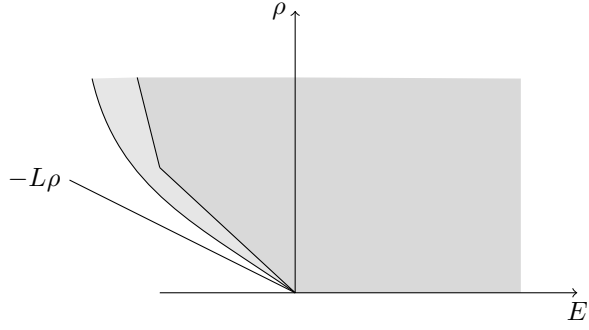
\end{center}
To prepare for the statement of the next theorem, we introduce: 
\begin{Notation} 
For $(\m)\in \R^5$, set $\tilde s_\L(\m)= - \Phi_\L^*(\m)$.
\end{Notation}  
This notation is partly justified by the relation between $\tilde s_\L$ and the information entropy $s_\L$. In fact, when $(\m)$ is in $\mathfrak{S}_\L$, by part \ref{first} of Proposition \ref{PhiStar}, we have
 $\tilde s_\L(\m) = -s_\L(g_{((\mu_\Lambda, \vec \lambda_\Lambda, \beta_\Lambda)),\L})$ for $(\mu_\Lambda, \vec \lambda_\Lambda, \beta_\Lambda)$  the unique thermodynamic parameter corresponding to $(\m)$ such that \eqref{mainEqn} holds. 
 The following theorem shows that, as $\L$ tends to infinity, $\tilde s_\L(\m)$ converges to thermodynamic limit microcanonical entropy $s(\m)$ and $(\mu_\Lambda, \vec \lambda_\Lambda, \beta_\Lambda)$ converges to $(\aaa)$, the thermodynamic parameter corresponding to $(\m)$ in the limit.
 


\begin{Thm} \label{MainTheorem}
Fix any sequence of $\L$'s that approaches infinity in the strong van Hove sense and approximable by cubes. 
For $(\m) $ in the interior of the domain of $s$, we have
\begin{enumerate}
\item 
{$\tilde s_\L(\m) \to s(\m)$, as $\L$ tends to infinity. }
\item
If $s$ is differentiable on some open convex set $C$ containing $(\m)$ (e.g.~as in section \ref{Pressure}), for $(\aaa) = -\nabla s(\m)$,
\begin{equation}
   (\mu_\L, \vec \lambda_\Lambda,
    \beta_\Lambda) 
   \to 
   (\aaa)
   \text{ as $\L$ tends to infinity}.
\end{equation}
\end{enumerate}
($(\mu_\Lambda, \vec \lambda_\Lambda, \beta_\Lambda)$ is well defined for all $\L$ large.)
\end{Thm}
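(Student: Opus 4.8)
The plan is to obtain both statements from the convergence of the conjugates $\Phi^*_\L$ that was already assembled in the proof of Theorem~\ref{interiors}. There, through Facts~\ref{pointepi}--\ref{epipoint}, we showed that the epigraphs of $\Phi^*_\L$ converge to the epigraph of $\Xi^*=-s$, and hence that $\Phi^*_\L \to -s$ uniformly on compact subsets of ${\rm int\,dom}(s)$. Statement~(1) is then immediate: fixing $(\m)\in{\rm int\,dom}(s)$ and recalling $\tilde s_\L(\m)=-\Phi^*_\L(\m)$, pointwise convergence gives $\tilde s_\L(\m)\to -\Xi^*(\m)=s(\m)$ by \eqref{-s as a conjugate}.

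For (2) I would first record that $(\mu_\L,\vec\lambda_\L,\beta_\L)$ is well defined for all $\L$ large: by Theorem~\ref{interiors}(\ref{interior is interior}) we have $(\m)\in\mathfrak{S}_\L$ for $\L$ large, and the homeomorphism of Proposition~\ref{global bijection} supplies the unique parameter with $\nabla\Phi_\L(\mu_\L,\vec\lambda_\L,\beta_\L)=(\m)$. The key reformulation is that this parameter is precisely a subgradient of $\Phi^*_\L$ at $(\m)$: by convex duality $(\mu_\L,\vec\lambda_\L,\beta_\L)\in\partial\Phi^*_\L(\m)$ if and only if $(\m)\in\partial\Phi_\L(\mu_\L,\vec\lambda_\L,\beta_\L)$, and since $\Phi_\L$ is differentiable on $\mathfrak{I}$ the latter is exactly \eqref{mainEqn}. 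Thus (2) amounts to showing that these subgradients converge to $-\nabla s(\m)$, which by \eqref{-s as a conjugate} is the unique subgradient of $\Xi^*=-s$ at $(\m)$ when $s$ is differentiable there.

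I would carry this out by a compactness-and-uniqueness argument rather than quoting a packaged gradient-convergence theorem. Choosing a closed ball $\overline B\subset{\rm int\,dom}(s)$ centered at $(\m)$, the uniform convergence $\Phi^*_\L\to -s$ makes $\{\Phi^*_\L\}$ uniformly bounded on $\overline B$; the standard estimate bounding subgradients of a convex function by its oscillation on a neighborhood then shows $\{(\mu_\L,\vec\lambda_\L,\beta_\L)\}$ is bounded. Passing to any convergent subsequence $(\mu_{\L_k},\vec\lambda_{\L_k},\beta_{\L_k})\to(\aaa)$, I let $k\to\infty$ in the subgradient inequality $\Phi^*_{\L_k}(y)\geq \Phi^*_{\L_k}(\m)+(\mu_{\L_k},\vec\lambda_{\L_k},\beta_{\L_k})\cdot(y-(\m))$, using the pointwise convergence $\Phi^*_{\L_k}\to -s$, to obtain $-s(y)\geq -s(\m)+(\aaa)\cdot(y-(\m))$ for all $y$, i.e.\ $(\aaa)\in\partial(-s)(\m)$. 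Since $s$ is differentiable at $(\m)$ this subgradient is unique and equals $-\nabla s(\m)$; as every subsequential limit of the bounded sequence is forced to this common value, the whole sequence converges, which is (2).

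I expect the genuine obstacle to be exactly the passage from convergence of the convex functions $\Phi^*_\L$ to convergence of their subgradients. Boundedness is a soft consequence of uniform boundedness, and the limit inequality needs only pointwise convergence at the two points $y$ and $(\m)$; the real point is that a subsequential limit of subgradients is actually a subgradient of the limit function, and this pins down the limit only because differentiability of $s$ at $(\m)$ renders $\partial(-s)(\m)$ a singleton. Without differentiability one could only trap the limit in a possibly multivalued $\partial(-s)(\m)$ and convergence of the full sequence would fail, which is why hypothesis~(2) invokes differentiability of $s$ --- available in the strictly convex regime via Lemma~\ref{s essentially differentiable}.
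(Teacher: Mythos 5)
Your proof is correct, and its overall architecture is the one the paper uses: part (1) read off from the epigraph/uniform convergence already assembled in the proof of Theorem \ref{interiors}, and part (2) by viewing the finite-volume parameters as (sub)gradients of the conjugates $\Phi^*_\L$ at $(\m)$ and passing convergence of the convex functions to convergence of these (sub)gradients. The difference is in how that last step is executed. The paper first upgrades $\tilde s_\L=-\Phi^*_\L$ to a differentiable function on $\mathfrak{S}_\L$ (using strict convexity of $\Phi_\L$ and \cite[Theorem 26.3]{Rock}), finds a common open convex set $C'$ containing $(\m)$ inside all $\mathfrak{S}_\L$ for $\L$ large, and then quotes \cite[Theorem 25.7]{Rock} (for convex functions, pointwise convergence implies convergence of gradients) to conclude $\nabla\tilde s_\L(\m)\to\nabla s(\m)$. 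You instead prove the needed instance of that theorem by hand: the duality $(\mu_\L,\vec\lambda_\L,\beta_\L)\in\partial\Phi^*_\L(\m)\Leftrightarrow(\m)\in\partial\Phi_\L(\mu_\L,\vec\lambda_\L,\beta_\L)$ from \cite[Theorem 23.5]{Rock}, boundedness of the subgradients from the uniform bound on $\Phi^*_\L$ near $(\m)$, subsequential limits trapped in $\partial(-s)(\m)$ via the subgradient inequality, and uniqueness of that subgradient from differentiability of $s$ at $(\m)$. What your route buys: it never needs $\tilde s_\L$ itself to be differentiable, so the strict convexity of $\Phi_\L$ plays no role in part (2); it is also, in effect, a pointwise proof of the Attouch-type statement that the paper relegates to the remark following the theorem. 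What the paper's route buys is brevity.

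One small point to tighten: the pointwise convergence $\Phi^*_{\L_k}\to-s$ is only known on ${\rm int \,dom}(s)$ (epi-convergence gives nothing pointwise on or outside the boundary of the domain), so the limiting inequality you obtain holds a priori only for $y$ in a neighborhood of $(\m)$, not ``for all $y$'' as written. This costs nothing: for a convex function a subgradient inequality valid on a neighborhood of $(\m)$ is automatically global (compare $y=(\m)+t\left(z-(\m)\right)$ for small $t>0$ with convexity), or alternatively one passes to boundary points of ${\rm dom}(s)$ by \cite[Theorem 7.5]{Rock}. With that sentence added, your argument is complete.
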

\proof
{As $\tilde s_\L= - \Phi_\L^*$, then $\tilde s_\L (\m)\to -s(\m)$ follows from proof of Part \ref{interior is interior} in Theorem \ref{interiors}.}

By Part \ref{interior is interior} of Theorem \ref{interiors},
$(\m)$ is in $\mathfrak{S}_{\L}$ for all $\L$ large.Therefore for such $\L$'s, $(\mu_\Lambda, \vec \lambda_\Lambda, \beta_\Lambda)$ is well defined.
As $\Phi_\L$ is strictly convex on its domain $\mathfrak I$, $\tilde s_\L$ is differentiable on $\mathfrak{S}_{\L}$, cf.~\cite[Theorem 26.3]{Rock}. By convexity of $\mathfrak{S}_{\L}$ and Part \ref{interior is interior} of Theorem \ref{interiors}, we may find open convex set $C'$ containing $(\m)$ such that $C'\subset C$ and $C'\subset \mathfrak{S}_{\L}$ for all $\L$ large. 
Now we may use \cite[Theorem 25.7]{Rock} (roughly: for convex functions pointwise convergence implies convergence of derivatives) to conclude that
$\nabla \tilde s_\L(\m) \to \nabla s(\m)$, i.e.~
$(\mu_\L, \vec \lambda_\Lambda, \beta_\Lambda) \to (\mu, \vec \lambda, \beta)$.
\qed

\begin{Rmk}
Of less practical use is the following: 
In general, for any $(\m)\in {\rm int \ dom}(s)$, we know that the subdifferential set $\partial s$ of  $s$ at $(\m)$ is not empty, cf.~ \cite[Theorem 23.4]{Rock}. 
{Then for any $(\aaa)$ such that $-(\mu, \vec \l, \beta) \in \partial s(\rho, \vec u,E)$ there is sequence $(\rho_\L, \vec u_\L, E_\L) \to (\rho, \vec u,E)$ such that $\nabla \tilde s_\L(\rho_\L, \vec u_\L, E_\L) \to -(\mu, \vec \l, \beta)$. }
This follows from Attouche's Theorem: For $f_n$, $f$ proper, closed, convex functions, if $f_n$ converges to $f$ epigraphically  {then} $\partial f_n$  converges to $\partial f$ graphically. See \cite[\S 5E and Theorem 12.35]{RW} for definitions and proof.
\end{Rmk}

\subsection{Comparison with Maximum Likelihood Estimators} \label{MaxLike}
Several articles address the consistency of maximum likelihood estimators for Gibbs point processes, for example \cite{CG}, \cite{Gi}, \cite{DL}. The main point there is: given a Gibbs state  with parameter $\beta$, let $\omega$, a locally finite configuration on $\R^3$, be a realization of it. For $\o_\L$ the restriction of $\o$ on $\L$, maximize the likelihood
\begin{equation}
\begin{split}
      \beta_\L
      =
      {\rm argmax}_\beta 
      \frac
      {\exp\left( \beta U(\o_\L)\right) }
      {Z_\L(\beta)},
\end{split}
\end{equation}
for $U$ the interaction and $\beta<0$ to match our conventions here.
Then \cite{CG}, \cite{Gi}, \cite{DL} show that almost always, $\beta_\L \to \beta$, as $\L \to \infty$. 

For exponential families of measures, maximizing likelihood and minimizing entropy are closely related in general.
Let $x_i$, $i=1,\ldots,n$ be independent realizations of a member of an exponential family of probability measures
$
      \mu_\vartheta = 
      \exp\left( \vartheta t(x) - \log Z(\vartheta)\right)\mu_0
$.
Then maximizing the likelihood
$ \prod_i \exp\left( \vartheta t(x_i) - \log Z(\vartheta)\right)$
is the same as solving
\begin{equation}
\begin{split}
        \dfrac{d}{d\vartheta} 
        \log  \prod_i \exp\left( \vartheta t(x_i) - \log Z(\vartheta)\right) = 0
        \Leftrightarrow
        \frac1n\sum_i t(x_i)   =  \dfrac{d}{d\vartheta}\log Z(\vartheta).
\end{split}
\end{equation}
Note that the right hand side is $\mathbb{E}_\vartheta[t]$. In other words, for exponential families of the form $\exp\left(\vartheta t(x) - \log Z(\vartheta)\right)$, the maximum likelihood estimator $\vartheta$ for $\{x_i \}_{i=1}^n$ is the same as the $\vartheta$ of minimizing entropy with constraint $\sum t(x_i)/n$. See \cite[p.~82, p.~94]{Kull} and the references there for more.

It is clear then how our convergence of thermodynamics parameters differs from the consistency problem of maximum likelihood estimators: in our case, the constrains (i.e. $\rho, \vec u, E$) are given and fixed for all volumes $\L$ whereas in the consistency problems the constrains come from the restriction of the realized configuration on $\L$. Therefore, in the consistency problems the constraints not only change with $\L$, but they also depend on $\omega$, i.e.~the estimated thermodynamic parameters  are a sequence of  random variables. 

\subsection{Comparison with Lanford} \label{LanAgain}
Lanford in \cite[p.~63]{L} shows (and \cite[p.~57]{M-L} explains a crucial step in the proof) that for $\beta$ the derivative of the microcanonical thermodynamic limit entropy $s$ at energy $E$ (when this exists even in the absence of kinetic energy) for fixed density $\rho$, the information entropy of the {\em canonical} distribution with parameter $\beta$ converges to $-s(\rho,E)$. Note carefully that for Lanford $\beta$ stays the same over all volumes, whereas in our work we show convergence while the parameters change  with volume. Furthermore, the measure in Lanford's argument is canonical, rather than grand canonical. This is because $\rho$ is now fixed throughout and is not on the same footing as $E$. In particular, Lanford's $\beta$ is a function of $\rho$. 

\section{Local Homeomorphism at the Thermodynamic Limit}\label{localbijection}
We show here that at the thermodynamic limit the bijection between thermodynamic parameters and macroscopic quantities as in Theorem \ref{bijection}
holds at least locally in some region. The existence of such a bijection  for continuous systems is folklore in the theory of hydrodynamic limits, see for example \cite[p.~530, p.~556]{OVY}. For the case of particle configurations on the line see \cite[Proposition 5.3]{LR}, \cite[Theorem 10.2]{V}. 

Throughout this section, in addition to stability and temperedness, we will assume pair interactions:  $U(q_1,\ldots,q_n) = \sum_{i \neq j} \Phi(q_i - q_j)$. 
Notice that the stability \eqref{stability} and temperedness  as in \eqref{condition: temperedness} imply that 
\begin{equation} \label{Ruelle}
\begin{split}
     C(\beta)
     :=
     \int\limits_{\R^3} 
     \left|\exp\left(\beta \Phi(x)\right) -1\right| 
     dx< +\infty, 
     \quad \text{for all $\beta<0$},
\end{split}
\end{equation}
cf.\,\cite[p.~32, p.~72]{R}.
To establish the  local homeomorphism, we will need the thermodynamic limit pressure $\Xi$ to be both strictly convex 
(as in section \ref{section: strict convexity}) and differentiable. It is standard that, when not taking velocities into account,
 i.e.~$ \mathfrak X_\L = \bigsqcup_{n\geq 0} \L^n$, $H=U$ and $\vec \l = 0$,
$\Xi$ is analytic in $\mu$ and $\beta$ in the low density region 
\begin{equation} 
\begin{split}
    \left
    \{(\mu, \beta): 
    \beta< 0, 
    \mu < 2\beta L -1 - \log C(\beta)
    \right\}.
\end{split}
\end{equation}
Analyticity with respect to $\mu$ is shown in \cite[Theorem 4.3.1]{R}. Combining this with results from \cite{LP}, analyticity with respect to $\beta$ also follows. This is presented in detail in \cite[Appendix D]{X}.
 
Including the kinetic energy and $\vec \l$, it follows easily that we have differentiability of $\Xi(\mu, \vec \l, \beta)$ for
\begin{equation}
\begin{split}
     \mathcal{R}
    =
    \left\{
    (\mu, \vec \l, \beta): 
    \beta < 0, 
     \mu 
     <
     2\beta L -1 
     - \log C(\beta) 
     - \log 
     \int\limits_{\R^3} 
     \exp\left(\vec \l \cdot \vec p + \beta\dfrac{|\ \vec p\ |^2}2\right)d\vec p
     \right\}.
\end{split}
\end{equation}

\begin{Prop}
Let $U$ be a stable, tempered, pair interaction potential with $\Xi$ strictly convex (e.g.~as in section \ref{section: strict convexity}) and let $K$ be convex, open subset of $\mathcal R$. Then
$\nabla \Xi:K \to \R^5$ is a homeomorphism onto $\nabla \Xi(K)$ with $(\nabla \Xi)^{-1} = -\nabla s$.
\end{Prop}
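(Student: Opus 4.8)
The plan is to combine the strict convexity of $\Xi$ on all of $\mathfrak{I}$ (Proposition \ref{imcludeP}) with its differentiability on $\mathcal{R}$, and then to read off the inverse of $\nabla\Xi$ directly from the conjugacy $\Xi^*=-s$ recorded in \eqref{-s as a conjugate}. Since $K\subset\mathcal{R}$, the gradient $\nabla\Xi$ is defined and continuous on the open set $K$ (gradients of convex functions are continuous on the interior of the domain, \cite[Theorem 25.5]{Rock}), and $\Xi$ is strictly convex there. I emphasize that Lemma \ref{covex lemma} does \emph{not} apply to $\Xi$ on its whole domain, because differentiability of $\Xi$ is only available on $\mathcal{R}$ and not on all of $\mathfrak{I}$; this is precisely why I argue locally on $K$ through the conjugate subgradient correspondence rather than by a single global appeal to Lemma \ref{covex lemma}.

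The core step is as follows. Because $\Xi$ is strictly convex with open domain $\mathfrak{I}$, it is essentially strictly convex, so by \eqref{Xi-s} its conjugate $\Xi^*=-s$ is essentially smooth; equivalently, $s$ is essentially differentiable (Lemma \ref{s essentially differentiable}) and hence differentiable on ${\rm int\,dom}(s)$ with continuous gradient there. Fix $x\in K$ and set $y=\nabla\Xi(x)$, so that $y\in\partial\Xi(x)$. The fundamental conjugate subgradient relation \cite[Theorem 23.5]{Rock} gives $x\in\partial\Xi^*(y)$, whence $\partial\Xi^*(y)\neq\emptyset$. Essential smoothness of $\Xi^*$ forces its subdifferential to be empty off ${\rm int\,dom}(\Xi^*)={\rm int\,dom}(s)$, so in fact $y\in{\rm int\,dom}(s)$, where $\Xi^*=-s$ is differentiable and $\partial\Xi^*(y)=\{-\nabla s(y)\}$ is a singleton. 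Therefore $x=-\nabla s(y)$, that is, $(-\nabla s)\circ\nabla\Xi={\rm id}_K$. The singleton conclusion also yields injectivity of $\nabla\Xi$ on $K$, since two preimages of the same $y$ would both lie in $\partial\Xi^*(y)$.

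It remains to assemble these observations into a homeomorphism. The map $\nabla\Xi:K\to\nabla\Xi(K)$ is a continuous bijection (continuity from \cite[Theorem 25.5]{Rock}, injectivity as above, surjectivity onto its image by definition), and its image satisfies $\nabla\Xi(K)\subset{\rm int\,dom}(s)$ by the argument just given. The left inverse $-\nabla s$ is continuous on ${\rm int\,dom}(s)$, again because $s$ is differentiable there with continuous gradient, so $(-\nabla s)|_{\nabla\Xi(K)}$ is a continuous two-sided inverse of $\nabla\Xi$; hence $\nabla\Xi$ is a homeomorphism onto $\nabla\Xi(K)$ with $(\nabla\Xi)^{-1}=-\nabla s$. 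I expect the main obstacle to be the bookkeeping that guarantees $\nabla\Xi(K)$ lands inside ${\rm int\,dom}(s)$, where $s$ is genuinely differentiable: this is exactly what essential smoothness of $\Xi^*$ (equivalently Lemma \ref{s essentially differentiable}) provides, by excluding boundary points of ${\rm dom}(s)$ at which the subdifferential of $\Xi^*$ would be empty and $s$ need not be differentiable.
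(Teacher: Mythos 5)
Your proof is correct and takes essentially the same route as the paper: both arguments rest on the conjugate subgradient relation \cite[Theorem 23.5]{Rock} together with the essential smoothness of $\Xi^*=-s$ (Lemma \ref{s essentially differentiable}) to force $\nabla \Xi(K) \subset {\rm int\,dom}(s)$ and to identify $(\nabla \Xi)^{-1} = -\nabla s$, with continuity of both gradient maps supplied by \cite[Theorem 25.5]{Rock}. The only cosmetic difference is that the paper deduces injectivity of $\nabla \Xi$ on $K$ from strict convexity plus differentiability via \cite[Theorem B, p.~99]{RV}, whereas you deduce it from the fact that $\partial \Xi^*(y)$ is a singleton at points of differentiability of $\Xi^*$ --- an equivalent mechanism.
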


\proof  
On such a $K$ the pressure $\Xi$ is both differentiable and strictly convex which implies that $\nabla \Xi$ is one-to-one from $K$ to $\nabla \Xi (K)$, using \cite[Theorem B, p.~99]{RV} again. $\nabla \Xi$ is also continuous, see \cite[Theorem 25.5]{Rock}.

$\nabla \Xi(\mu, \vec \l, \beta) = (\rho, \vec u, E)$ if and only if $(\mu, \vec \l, \beta)$ is in the subdifferential of $-s$ at $(\m)$ (\cite[Theorem 23.5]{Rock}).
Therefore, as $-s$ is essentially differentiable by Lemma \ref{s essentially differentiable}, the range $\nabla \Xi (K)$ can only be a subset of the interior of the domain of $-s$, where $-s$ is differentiable, and $\nabla \Xi = (\nabla(-s))^{-1}$. Since $\nabla s$ is also continuous, $\nabla \Xi$ is a homeomorphism.
\qed

\end{document}